\theoremstyle{definition}
\newtheorem{theorem}{Theorem}
\newtheorem{lemma}{Lemma}
\newtheorem{cor}{Corollary}
\def\CircleArrowleft{\ensuremath{%
  \reflectbox{\rotatebox[origin=c]{180}{$\circlearrowleft$}}}}
\def\CircleArrowright{\ensuremath{%
  \reflectbox{\rotatebox[origin=c]{180}{$\circlearrowright$}}}}
\title{Hexagonal Picture Scanning Automata
}
\author{
 Deepalakshmi D \\
   Amal Jyothi College of Engineering,  Kanjirappally, Kerala,
  India\\
  APJ Abdul Kalam Technological University,Thiruvananthapuram,
  India\\
  \texttt{ddeepalakshmi@amaljyothi.ac.in} \\
   \And
  Lisa Mathew$^*$ \\
   Amal Jyothi College of Engineering,  Kanjirappally, Kerala,
  India\\
    \texttt{lisamathew@amaljyothi.ac.in} \\
}
\begin{document}
\maketitle

\begin{abstract}
Two new classes of finite automata, called General hexagonal Boustrophedon finite automata and General hexagonal returning finite automata operating on hexagonal grids, are introduced and analyzed. The work establishes the theoretical foundations for these automata models, examines their computational properties, and investigates the relationships and equivalences between the language families they define. The research contributes to the broader understanding of two-dimensional automata theory by extending classical finite automaton concepts to hexagonal geometric structures with specialized traversal patterns.
\end{abstract}

\keywords{Picture languages\and Hexagonal Picture Languages\and Finite automata \and Picture scanning automata\and  Boustrophedon Finite Automaton.}

\section{Introduction}
Two-dimensional automata theory extends classical string processing to higher-dimensional structures, providing fundamental insights into pattern recognition and image processing \cite{inoue1991survey,fernau2015scanning}. Early work by Siromoney et al. \cite{siromoney1973picture,krithivasan1974array,subramanian1989siromoney} established the theoretical framework for systematic study of array and picture processing.

Traditional two-dimensional language recognition focuses on rectangular grids with standard traversal patterns. However, alternative geometric arrangements and traversal strategies can lead to fundamentally different classes of recognizable languages \cite{giammarresi1997two}. The work by Fernau et al. \cite{fernau2015scanning,picturescanning} on boustrophedon scanning methods demonstrates that alternating-direction traversal patterns offer computational advantages in two-dimensional data processing.

Hexagonal grid structures provide a compelling alternative to rectangular arrangements. Unlike rectangular grids with four-neighbor connectivity, hexagonal grids offer six-neighbor connectivity, which models natural phenomena more closely and provides computational advantages in certain applications. Siromoney et al. \cite{krithivasan1974array,siromoney1976hexagonal} established the foundational framework for hexagonal arrays, while specific hexagonal automaton models have been explored extensively \cite{anitha2017hexagonal,pruvsa2004two}. The work on Boustrophedon automaton scanning rectangular pictures motivated a corresponding study on hexagonal pictures.

Studies on Picture-walking automata \cite{kari2011survey}  show that movement patterns significantly impact computational power and recognizable language classes. This motivates investigating how boustrophedon and returning movement strategies perform on hexagonal grid structures.

The rest of this paper is organized as follows. Section 2 discusses preliminary notions related to the results. In Sections 3 and 4 we introduce and analyze General Hexagonal Boustrophedon Finite Automata and General Hexagonal Returning Finite Automata. Section 5 examines equivalences between these language families. Section 6 presents conclusions and future work directions.

\section{Preliminaries}
\label{prelims}
\subsection{Hexagonal Picture languages}
Given a finite alphabet $\Sigma$, a hexagonal picture $P$ is defined as a hexagonal arrangement of symbols drawn from $\Sigma$. Figure \ref{hex} depicts a hexagonal picture over the alphabet $\Sigma =\{a\}$ 
 along with a bordered hexagonal picture over the same alphabet (surrounded by $\# \notin\Sigma$).  We denote the collection of all possible hexagonal pictures over alphabet $\Sigma$ as $\Sigma^{**H}$. A hexagonal picture language $L$ is any subset of $\Sigma^{**H}$. 
$\Sigma^{++H}$ represents the set of all non-empty hexagonal pictures over $\Sigma$ i.e. $\Sigma^{++H} = \Sigma^{**H} - \{\lambda\}$, where $\lambda$ denotes the empty picture.
\begin{figure}[h]
     \centering
  \begin{minipage}{0.45\textwidth}
            \centering
\begin{tikzpicture}[rotate=90, scale=0.6]
\def\n{2}
\foreach \i in {0,...,2} {
    \pgfmathsetmacro{\num}{\n + \i}
    \pgfmathsetmacro{\startx}{-\i*0.5}
    \foreach \j in {0,...,\num} {
        \node at ({\startx + \j}, {-0.8*\i}) {\textit{a}};
    }
}
\foreach \i in {1,...,2} {
    \pgfmathsetmacro{\num}{\n + 2 - \i}
    \pgfmathsetmacro{\startx}{- (2 - \i)*0.5}
    \foreach \j in {0,...,\num} {
        \node at ({\startx + \j}, {-0.8*(\i + 2)}) {\textit{a}};
    }
}
\end{tikzpicture}
            \end{minipage}
     \hfill
     \begin{minipage}{0.45\textwidth} 
          \centering
\begin{tikzpicture}[rotate=90, scale=0.6]
\def\n{3} 
\foreach \i in {0,...,3} {
    \pgfmathsetmacro{\num}{\n + \i}
    \pgfmathsetmacro{\startx}{-\i*0.5}
    \foreach \j in {0,...,\num} {
        \ifnum\i=0
            \node at ({\startx + \j}, {-0.8*\i}) {\#};
        \else
            \ifnum\j=0
                \node at ({\startx + \j}, {-0.8*\i}) {\#};
            \else
                \ifnum\j=\num
                    \node at ({\startx + \j}, {-0.8*\i}) {\#};
                \else
                    \node at ({\startx + \j}, {-0.8*\i}) {\textit{a}};
                \fi
            \fi
        \fi
    }
}
\foreach \i in {1,...,3} {
    \pgfmathsetmacro{\num}{\n + 3 - \i}
    \pgfmathsetmacro{\startx}{- (3 - \i)*0.5}
    \foreach \j in {0,...,\num} {
        \ifnum\i=3
            \node at ({\startx + \j}, {-0.8*(\i + 3)}) {\#};
        \else
            \ifnum\j=0
                \node at ({\startx + \j}, {-0.8*(\i + 3)}) {\#};
            \else
                \ifnum\j=\num
                    \node at ({\startx + \j}, {-0.8*(\i + 3)}) {\#};
                \else
                    \node at ({\startx + \j}, {-0.8*(\i + 3)}) {\textit{a}};
                \fi
            \fi
        \fi
    }
}
\end{tikzpicture}
      
     \end{minipage}
     \caption{a)Hexagonal picture b) Hexagonal picture surrounded by \#}
                \label{hex}

\end{figure}
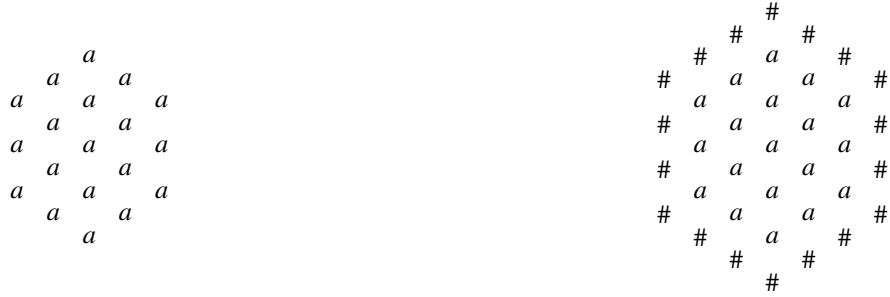
 A hexagonal picture is said to be of size $(l, m, n)$ if it sides are of length $l$, $m$, and $n$ (we assume that opposites sides are of same length). For a picture $p$ of size $(l,m,n)$  the bordered picture $\widehat{p}$ is of size $(l+1,m+1,n+1)$ \cite{anitha2017hexagonal}.

\section{Operations on Hexagonal Pictures}
\label{operations}
 Considering hexagonal pictures as geometrical objects, the following  unary operations have been defined   : $R_n$ - clockwise rotation by $\frac{n\pi}{3}, n=0,1,2\cdots5$,$r_n$- reflection in the line with slope  $\frac{n\pi}{6}$, $n$ = $0,1,2\cdots5$ passing through the center.\\ 
 Note that $R_0 = I$, the identity mapping. Clearly, these 12 operations form a group as illustrated in Table  \ref{comp}.\\
 \begin{table}[h]
     \centering
 \begin{tabular}{|c|c|c|c|c|c|c|c|c|c|c|c|c|}
 \hline
     $\circ$&$I$ &$R_1$&$R_2$&$R_3$&$R_4$&$R_5$&$r_0$&$r_1$&$r_2$&$r_3$&$r_4$&$r_5$\\
     \hline
    $I$&$I$&$R_1$&$R_2$ & $R_3$ & $R_4$&$R_5$&$r_0$&$r_1$&$r_2$&$r_3$&$r_4$&$r_5$\\
    \hline
$R_1$&$R_1$&$R_2$&$R_3$&$R_4$&$R_5$&$I$&$r_1$&$r_2$&$r_3$&$r_4$&$r_5$&$r_0$\\ 
   \hline
   $R_2$&$R_2$&$R_3$&$R_4$&$R_5$&$I$&$R_1$&$r_2$&$r_3$&$r_4$&$r_5$&$r_0$&$r_1$\\
  \hline
   $R_3$&$R_3$&$R_4$&$R_5$&$I$&$R_1$&$R_2$&$r_3$&$r_4$&$r_5$&$r_0$&$r_1$&$r_2$\\
  \hline
  $R_4$&$R_4$&$R_5$&$I$&$R_1$&$R_2$&$R_3$&$r_4$&$r_5$&$r_0$&$r_1$&$r_2$&$r_3$\\
  \hline
  $R_5$&$R_5$&$I$&$R_1$&$R_2$&$R_3$&$R_4$&$r_5$&$r_0$&$r_1$&$r_2$&$r_3$&$r_4$\\
  \hline
 $r_0$&$r_0$&$r_5$&$r_4$&$r_3$&$r_2$&$r_1$&$I$&$R_5$&$R_4$&$R_3$&$R_2$&$R_1$\\
 \hline
 $r_1$&$r_1$&$r_0$&$r_5$&$r_4$&$r_3$&$r_2$&$R_1$&$I$&$R_5$&$R_4$&$R_3$&$R_2$\\
 \hline
 $r_2$&$r_2$&$r_1$&$r_0$&$r_5$&$r_4$&$r_3$&$R_2$&$R_1$&$I$&$R_5$&$R_4$&$R_3$\\
 \hline
 $r_3$&$r_3$&$r_2$&$r_1$&$r_0$&$r_5$&$r_4$&$R_3$&$R_2$&$R_1$&$I$&$R_5$&$R_4$\\
 \hline
 $r_4$&$r_4$&$r_3$&$r_2$&$r_1$&$r_0$&$r_5$&$R_4$&$R_3$&$R_2$&$R_1$&$I$&$R_5$\\
 \hline
$r_5$&$r_5$&$r_4$&$r_3$&$r_2$&$r_1$&$r_0$&$R_5$&$R_4$&$R_3$&$R_2$&$R_1$&$I$ \\
 \hline
\end{tabular}
  \caption{Composition Table for unary operation}
     \label{comp}
 \end{table}
 \begin{table}[h]
     \centering
     \begin{tabular}{|c|c|}
     \hline
    $\circ$&Normal form\\
    \hline
    $I$&$r_1\circ r_1$\\
    \hline
    $R_2$&$R_1\circ R_1$\\
    \hline
    $R_3$&$R_1\circ(R_1\circ R_1)$\\
    \hline
     $R_4$&$(R_1\circ R_1)\circ(R_1\circ R_1)$\\
     \hline
    $R_5$&$R_1\circ((R_1\circ R_1)\circ(R_1\circ R_1))$\\ 
    \hline
    $r_0$&$r_1\circ R_1$\\
    \hline
    $r_2$&$r_1\circ (R_1\circ((R_1\circ R_1)\circ(R_1\circ R_1)))$\\
    \hline
    $r_3$&$r_1\circ ((R_1\circ R_1)\circ(R_1\circ R_1))$\\
    \hline
    $r_4$&$r_1\circ  (R_1\circ(R_1\circ R_1))$\\
    \hline
    $r_5$&$r_1\circ (R_1\circ R_1)$\\
    \hline
        \end{tabular}
     \caption{Normal form for the operators $R_1$ and $r_1$}
     \label{norm}
 \end{table}
 These operations can be partitioned into a set of 6 rotations and a set of 6 reflections which can be expressed in terms of the two primary operations $r_1$ and $R_1$ as given in Table \ref{norm}. These operations can also be applied (picture-wise) to picture languages and (language-wise) to families of picture languages. 
 \section{General Hexagonal Boustrophedon Finite Automata}\label{GHBFA}
 A General Hexagonal Boustrophedon Finite Automata (GHBFA) is an 8 - tuple $M = (Q,\Sigma,R,s,F,\#,\square,D)$, where Q is a finite set of states and is divided into $Q_f$ and $Q_b$, $\Sigma$ is an input alphabet, $R\subseteq Q\times(\Sigma\cup\{\#\})\times Q$ is a finite set of rules of the form  $(q,a,p)$ (usually written as $qa\rightarrow p$), $s\in Q_f$ is the initial state, $F$ is the set of all final states, $\# \notin \Sigma$ denotes the boundary of the picture, $\square$ denotes an erased position, $ D $ denotes the set of all possible directions of movement of the tape head. We add some additional restrictions. If $q\in Q_f$ and $a\in\Sigma$ then $qa\rightarrow p \in R $ is only possible if $p\in Q_f$. Such transitions are called forward transitions and collected in $R_f$. Similarly, if $q\in Q_b$ and $a \in \Sigma$  $qa\rightarrow p\in R$ is possible only when $p\in Q_b $(backward  transitions collected in $R_b$) Finally border transitions (collected in $R_\#$) are of the form $q\#\rightarrow p$ with $q\in Q_f$ if and only if $p\in Q_b$. The set of all possible directions is given by\\

\noindent\begin{align*}
 D=&\Big\{\begin{pmatrix}
s \rotatebox[origin=c]{-60}{$\rightarrow$} & \CircleArrowleft\\
\CircleArrowright& \rotatebox[origin=c]{60}{$\uparrow$}
\end{pmatrix},\begin{pmatrix}
s\rotatebox[origin=c]{60}{$\leftarrow$}&\CircleArrowright \\
\CircleArrowleft & \rotatebox[origin=c]{300}{$\uparrow$}
\end{pmatrix},\begin{pmatrix}
\CircleArrowright &s\downarrow\\
\uparrow&\CircleArrowleft
\end{pmatrix},\begin{pmatrix}
\CircleArrowleft&s\rotatebox[origin=c]{60}{$\uparrow$}\\
\rotatebox[origin=c]{-60}{$\rightarrow$}&\CircleArrowright 
\end{pmatrix},\begin{pmatrix}
 \downarrow&\CircleArrowleft\\
\CircleArrowright&s\uparrow
\end{pmatrix},\begin{pmatrix}
 \rotatebox[origin=c]{300}{$\uparrow$}&\CircleArrowright\\
\CircleArrowleft&s\rotatebox[origin=c]{-60}{$\downarrow$}
\end{pmatrix},\\
&\begin{pmatrix}
 \rotatebox[origin=c]{60}{$\downarrow$}& \CircleArrowright\\
\CircleArrowleft&s\rotatebox[origin=c]{60}{$\uparrow$}
\end{pmatrix},
\begin{pmatrix}
\CircleArrowright &\rotatebox[origin=c]{-60}{$\downarrow$}\\
s\rotatebox[origin=c]{300}{$\uparrow$}&\CircleArrowleft
\end{pmatrix},\begin{pmatrix}
\CircleArrowleft &\downarrow\\
s\uparrow&\CircleArrowright
\end{pmatrix},\begin{pmatrix}
\CircleArrowright &\rotatebox[origin=c]{60}{$\uparrow$}\\
s\rotatebox[origin=c]{60}{$\downarrow$}&\CircleArrowleft
\end{pmatrix},\begin{pmatrix}
 s\rotatebox[origin=c]{300}{$\uparrow$}&\CircleArrowleft\\
\CircleArrowright&\rotatebox[origin=c]{300}{$\downarrow$}
\end{pmatrix},\begin{pmatrix}
 s\downarrow&\CircleArrowright\\
\CircleArrowleft&\uparrow
\end{pmatrix}
\Big\}\end{align*}
 
\noindent where $s$ denotes the start position at one of the corners of the hexagon,  \CircleArrowleft~ indicates movement in  clockwise direction and \CircleArrowright~ indicates movement in  anticlockwise direction. Each $d \in D$ is represented by a $2\times 2$ matrix. The arrow associated with $s$ indicates the initial direction for reading the picture from the starting symbol $s$. When a side has been completely traversed i.e it encounters a $\#$,  it chooses the next direction according  to the next position in the matrix read clockwise.
\par For instance, $\begin{pmatrix}
s \rotatebox[origin=c]{-60}{$\rightarrow$} & \CircleArrowleft\\
\CircleArrowright& \rotatebox[origin=c]{60}{$\uparrow$}
\end{pmatrix}$  indicates that the starting symbol $s$ is positioned at the upper corner of the hexagonal picture. s\rotatebox[origin=c]{-60}{$\rightarrow$}, indicates that we begin reading the picture in the direction \rotatebox[origin=c]{-60}{$\rightarrow$} from $s$. When this row has been traversed completely we move to the next row in clockwise direction and traverse in the opposite direction (arrow), then move to the next row in anticlockwise direction and repeat the entire procedure until the picture has been fully traversed. 
  As the automaton progresses through the entire picture, each symbol is replaced with $\square$ in the order determined by the scanning direction. If the automaton reaches a final state upon completion of the traversal the picture is accepted.
  If $(p,A,f)$ and $(q,A',f)$ are two configurations such that $A$ and $A'$ are identical except for the position $(i,j,k)$, $1\leq i\leq l+1$, $1\leq j\leq m+1$, $1\leq k\leq n+1$, where $A[i, j, k]\in \Sigma$ while $A'[i, j, k]=\square$, we write $(p,A,f)$ $~\vdash_M$ $(q,A',f)$ if $p A[i, j, k]\rightarrow q\in R_f$. \\

Similarly, if $(p,A,f)$ and $(q,A',f)$ are two configurations such that $A$ and $A'$are identical but for one position $(i, j, k)$, $1\leq i\leq l+1$, $1\leq j\leq m+1$, $1\leq k\leq n+1$, where $A[i, j, k]\in \Sigma$ while $A'[i, j, k]=\square$  we write $(p,A,b)$ $~\vdash_M$ $(q,A',b)$ if $pA[i, j, k]\rightarrow q\in R_b$. \\

If $(p, A, f)$ and $(q, A, b)$ are two configurations, $(p, A, f)$ $~\vdash_M$ $(q, A, b)$ or $(p, A, b)$ $~\vdash_M$ $(q, A, f)$ if $p\# \rightarrow q \in R_{\#}$.\\

The reflexive transitive closure of $~\vdash_M$ is denoted by $~\vdash*{_M}$. A picture $A\in \Sigma^{++H}$ is accepted by a GHBFA $M$ with direction $d_{HBFA}$~=~${\begin{pmatrix}
 s\downarrow&\CircleArrowright\\
\CircleArrowleft&\uparrow
\end{pmatrix}}$ if $c_{init}(A)$$~\vdash*{_M}$~$c$ where $c$ is a final configuration.\\

Figure \ref{working} illustrates the scanning process of GHBFA on an input image,  the sequence of $\square$s indicating the progress made in processing the input. 
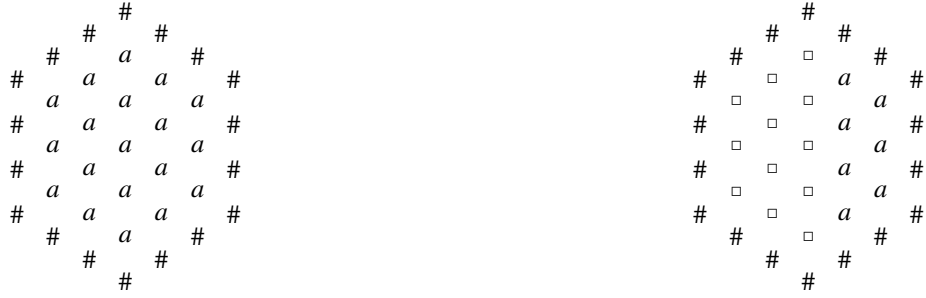
\begin{figure}[h]
    \centering
    
    \begin{minipage}{0.45\textwidth}
        \centering
        \begin{tikzpicture}[rotate=90, scale=0.6]
        \def\n{3}
        \foreach \i in {0,...,3} {
            \pgfmathsetmacro{\num}{\n + \i}
            \pgfmathsetmacro{\startx}{-\i*0.5}
            \foreach \j in {0,...,\num} {
                \ifnum\i=0
                    \node at ({\startx + \j}, {-0.8*\i}) {\#};
                \else
                    \ifnum\j=0
                        \node at ({\startx + \j}, {-0.8*\i}) {\#};
                    \else
                        \ifnum\j=\num
                            \node at ({\startx + \j}, {-0.8*\i}) {\#};
                        \else
                            \node at ({\startx + \j}, {-0.8*\i}) {\textit{a}};
                        \fi
                    \fi
                \fi
            }
        }
        \foreach \i in {1,...,3} {
            \pgfmathsetmacro{\num}{\n + 3 - \i}
            \pgfmathsetmacro{\startx}{- (3 - \i)*0.5}
            \foreach \j in {0,...,\num} {
                \ifnum\i=3
                    \node at ({\startx + \j}, {-0.8*(\i + 3)}) {\#};
                \else
                    \ifnum\j=0
                        \node at ({\startx + \j}, {-0.8*(\i + 3)}) {\#};
                    \else
                        \ifnum\j=\num
                            \node at ({\startx + \j}, {-0.8*(\i + 3)}) {\#};
                        \else
                            \node at ({\startx + \j}, {-0.8*(\i + 3)}) {\textit{a}};
                        \fi
                    \fi
                \fi
            }
        }
        \end{tikzpicture}
    \end{minipage}
    \hfill
    \begin{minipage}{0.45\textwidth}
        \centering
        \begin{tikzpicture}[rotate=90, scale=0.6]
        \def\n{3}
        \foreach \i in {0,...,3} {
            \pgfmathsetmacro{\num}{\n + \i}
            \pgfmathsetmacro{\startx}{-\i*0.5}
            \foreach \j in {0,...,\num} {
                \ifnum\i=0
                    \node at ({\startx + \j}, {-0.8*\i}) {\#};
                \else
                    \ifnum\j=0
                        \node at ({\startx + \j}, {-0.8*\i}) {\#};
                    \else
                        \ifnum\j=\num
                            \node at ({\startx + \j}, {-0.8*\i}) {\#};
                        \else
                        \node at ({\startx + \j}, {-0.8*\i}) {\textit{$\square$}};
                        \fi
                    \fi
                \fi
            }
        }
        \foreach \i in {1,...,3} {
            \pgfmathsetmacro{\num}{\n + 3 - \i}
            \pgfmathsetmacro{\startx}{- (3 - \i)*0.5}
            \foreach \j in {0,...,\num} {
                \ifnum\i=3
                    \node at ({\startx + \j}, {-0.8*(\i + 3)}) {\#};
                \else
                    \ifnum\j=0
                        \node at ({\startx + \j}, {-0.8*(\i + 3)}) {\#};
                    \else
                        \ifnum\j=\num
                            \node at ({\startx + \j}, {-0.8*(\i + 3)}) {\#};
                        \else
                            \node at ({\startx + \j}, {-0.8*(\i + 3)}) {\textit{a}};
                        \fi
                    \fi
                \fi
            }
        }
        \end{tikzpicture}
    \end{minipage}
    \caption{a)Input Picture~~b)Processing of (a) using  direction ~${\begin{pmatrix}
 s\downarrow&\CircleArrowright\\
\CircleArrowleft&\uparrow
\end{pmatrix}}$  }
\label{working}
\end{figure}
encapsulates all the information required to describes a configuration excluding the states.
 Now we have \\

 \begin{tabular}{|c|c|c|c|c|c|c|}
 \hline
$d=$&$\begin{pmatrix}
 s\downarrow&\CircleArrowright\\
\CircleArrowleft&\uparrow
\end{pmatrix}$&$\begin{pmatrix}
 s\rotatebox[origin=c]{300}{$\uparrow$}&\CircleArrowleft\\
\CircleArrowright&\rotatebox[origin=c]{300}{$\downarrow$}
\end{pmatrix}$&$\begin{pmatrix}
\CircleArrowright &\rotatebox[origin=c]{60}{$\uparrow$}\\
s\rotatebox[origin=c]{60}{$\downarrow$}&\CircleArrowleft
\end{pmatrix}$&$\begin{pmatrix}
\CircleArrowleft &\downarrow\\
s\uparrow&\CircleArrowright
\end{pmatrix}$&$\begin{pmatrix}
\CircleArrowright &\rotatebox[origin=c]{-60}{$\downarrow$}\\
s\rotatebox[origin=c]{300}{$\uparrow$}&\CircleArrowleft
\end{pmatrix}$&$\begin{pmatrix}
 \rotatebox[origin=c]{60}{$\downarrow$}& \CircleArrowright\\
\CircleArrowleft&s\rotatebox[origin=c]{60}{$\uparrow$}
\end{pmatrix}$\\
\hline
$f_d(A)$&$R_0$&$r_5$&$R_1$&$r_0$&$R_2$&$r_1$\\
\hline
$d=$&$\begin{pmatrix}
 \rotatebox[origin=c]{300}{$\uparrow$}&\CircleArrowright\\
\CircleArrowleft&s\rotatebox[origin=c]{-60}{$\downarrow$}
\end{pmatrix}$&$\begin{pmatrix}
 \downarrow&\CircleArrowleft\\
\CircleArrowright&s\uparrow
\end{pmatrix}$&$\begin{pmatrix}
\CircleArrowleft&s\rotatebox[origin=c]{60}{$\uparrow$}\\
\rotatebox[origin=c]{-60}{$\rightarrow$}&\CircleArrowright 
\end{pmatrix}$&$\begin{pmatrix}
\CircleArrowright &s\downarrow\\
\uparrow&\CircleArrowleft
\end{pmatrix}$&$\begin{pmatrix}
s\rotatebox[origin=c]{60}{$\leftarrow$}&\CircleArrowright \\
\CircleArrowleft & \rotatebox[origin=c]{300}{$\uparrow$}
\end{pmatrix}$&$\begin{pmatrix}
s \rotatebox[origin=c]{-60}{$\rightarrow$} & \CircleArrowleft\\
\CircleArrowright& \rotatebox[origin=c]{60}{$\uparrow$}
\end{pmatrix}$\\
\hline
$f_d(A)$&$r_2$&$R_3$&$R_4$&$r_3$&$R_5$&$r_4$\\
\hline
\end{tabular}\\

A picture $A$ is accepted by a $GHBFA$ $M$ with a direction $d$ if and only if $f_d(A)$ is accepted by the $GHBFA$ $M_{HBFA}$ that coincides with $M$ in every detail except for the direction $d$. For example, if  we consider
$d= \begin{pmatrix}
\CircleArrowright &s\downarrow\\
\uparrow&\CircleArrowleft
\end{pmatrix}$, in order to scan the given input picture $A$ we will start from the top right corner going downwards then upwards, instead of this we can take the reflection of given picture along the line of symmetry by the angle $90^{\circ}$ and start from the left top corner by reading top-down then bottom - up.
\\

The $GHBFA$ is deterministic $(GHBDFA)$, if for each $p\in Q$ and $a \in \Sigma \cup \{\#\}$, there is at most one $q \in Q$ with $pa \rightarrow q\in R$. Now  we define $L_d(GHBFA)$ as the language accepted by $GHBFA$ when working with the direction $d$. Then,
\[L(GHBFA) = \underset{d\in D}{\bigcup}L_d(GHBFA)\]
The following characterization is an immediate consequence of the definitions.
\begin{theorem}
The class $L_{HBFA}(GHBFA)$ coincides with the following classes of picture languages : 

$r_5(L_{\tiny {\begin{pmatrix}
 s\rotatebox[origin=c]{300}{$\uparrow$}&\CircleArrowleft\\
\CircleArrowright&\rotatebox[origin=c]{300}{$\downarrow$}
\end{pmatrix}}}
(GHBFA))$, $R_1(L_{\tiny{\begin{pmatrix}
\CircleArrowright &\rotatebox[origin=c]{60}{$\uparrow$}\\
s\rotatebox[origin=c]{60}{$\downarrow$}&\CircleArrowleft
\end{pmatrix}}}(GHBFA))$, $r_0(L_{\tiny{\begin{pmatrix}
\CircleArrowleft &\downarrow\\
s\uparrow&\CircleArrowright
\end{pmatrix}}}(GHBFA))$, $R_2(L_{\tiny{\begin{pmatrix}
\CircleArrowright &\rotatebox[origin=c]{-60}{$\downarrow$}\\
s\rotatebox[origin=c]{300}{$\uparrow$}&\CircleArrowleft
\end{pmatrix}}}(GHBFA))$, $r_1(L_{\tiny{\begin{pmatrix}
 \rotatebox[origin=c]{60}{$\downarrow$}& \CircleArrowright\\
\CircleArrowleft&s\rotatebox[origin=c]{60}{$\uparrow$}
\end{pmatrix}}}(GHBFA))$, $r_2(L_{\tiny{\begin{pmatrix}
 \rotatebox[origin=c]{300}{$\uparrow$}&\CircleArrowright\\
\CircleArrowleft&s\rotatebox[origin=c]{-60}{$\downarrow$}
\end{pmatrix}}}(GHBFA))$, $R_3(L_{\tiny{\begin{pmatrix}
 \downarrow&\CircleArrowleft\\
\CircleArrowright&s\uparrow
\end{pmatrix}}}(GHBFA))$, $R_4(L_{\tiny{\begin{pmatrix}
\CircleArrowleft&s\rotatebox[origin=c]{60}{$\uparrow$}\\
\rotatebox[origin=c]{-60}{$\rightarrow$}&\CircleArrowright 
\end{pmatrix}}}(GHBFA))$, $r_3(L_{\tiny{\begin{pmatrix}
\CircleArrowright &s\downarrow\\
\uparrow&\CircleArrowleft
\end{pmatrix}}}(GHBFA))$, $R_5(L_{\tiny{\begin{pmatrix}
s\rotatebox[origin=c]{60}{$\leftarrow$}&\CircleArrowright \\
\CircleArrowleft & \rotatebox[origin=c]{300}{$\uparrow$}
\end{pmatrix}}}(GHBFA))$, $r_4(L_{\tiny{\begin{pmatrix}
s \rotatebox[origin=c]{-60}{$\rightarrow$} & \CircleArrowleft\\
\CircleArrowright& \rotatebox[origin=c]{60}{$\uparrow$}
\end{pmatrix}}}(GHBFA))$.
\end{theorem}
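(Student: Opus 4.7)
The plan is to derive the theorem as an immediate corollary of the direction-versus-orientation correspondence tabulated just before the statement. The central observation is that every direction $d \in D$ is paired with a unary picture operation $f_d$ (a rotation $R_i$ or reflection $r_j$) whose sole purpose is to carry the starting corner, initial arrow, and clockwise/anticlockwise turning pattern of $d$ onto those of the canonical direction $d_{HBFA}$. So once one confirms this correspondence geometrically, the 12 equalities in the theorem are just the 12 rows of the table read as set identities.

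First I would spell out the key equivalence already asserted informally in the preceding paragraph: a picture $A$ is accepted by $M$ when scanned with direction $d$ if and only if $f_d(A)$ is accepted by $M_{HBFA}$. The forward direction is that the scanning sequence produced by direction $d$ on $A$ and the scanning sequence produced by $d_{HBFA}$ on $f_d(A)$ visit exactly the same symbols in the same order and cross boundary markers $\#$ at the same moments; hence the same sequence of forward/backward/border transitions is triggered, the same $Q_f/Q_b$ alternation occurs, and the final state reached is the same. The converse follows by applying $f_d^{-1}$, which exists because $f_d$ is an element of the 12-element dihedral-type group of Table 1 and thus acts bijectively on $\Sigma^{++H}$.

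Second, because $f_d$ is a bijection on hexagonal pictures, the equivalence rewrites as the set identity
\[
L_{HBFA}(GHBFA) \;=\; f_d\bigl(L_d(GHBFA)\bigr)
\]
for every $d \in D$. Applying this identity with each of the 12 directions listed in the table and substituting the associated operator (the row $f_d(A)$ of the table: $R_0$ for $d_{HBFA}$ itself, then $r_5, R_1, r_0, R_2, r_1, r_2, R_3, R_4, r_3, R_5, r_4$ for the remaining eleven) yields exactly the 12 characterizations stated in the theorem; the first, with $R_0 = I$, is the tautology $L_{HBFA}(GHBFA) = L_{HBFA}(GHBFA)$, and the other eleven are the non-trivial ones listed.

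The main obstacle is the case-by-case verification underpinning step one, namely checking that each $f_d$ in the table is indeed the correct element of the group that aligns direction $d$ with $d_{HBFA}$. This is straightforward but must be done carefully: for each matrix in $D$ one identifies the starting corner and traversal orientation, then applies the candidate $f_d$ and reads off the position of $s$ and the direction of its arrow in the transformed picture; the group law in Table 1 and the normal forms in Table 2 make these computations mechanical, but there are twelve of them and the signs of the reflections $r_j$ versus rotations $R_i$ must not be confused. Once these twelve geometric checks are in place, the set-theoretic conclusion is immediate.
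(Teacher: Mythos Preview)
Your proposal is correct and matches the paper's approach: the paper states this theorem as ``an immediate consequence of the definitions'' without giving a separate proof, relying precisely on the acceptance equivalence $A \in L_d(M) \iff f_d(A) \in L_{d_{HBFA}}(M_{HBFA})$ and the table of $f_d$ values that you invoke. Your write-up simply spells out the routine verification that the paper leaves implicit.
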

The group-theoretic framework above enables us to derive the characterizations of the remaining eleven classes from the previous theorem referring back to $L_{d_{HBFA}}(GHBFA)$. These results are combined in the next theorem.
\begin{theorem}
The resulting characterizations are listed~ below
\begin{itemize}
\item[] $L_{\tiny{\begin{pmatrix}
 s\rotatebox[origin=c]{300}{$\uparrow$}&\CircleArrowleft\\
\CircleArrowright&\rotatebox[origin=c]{300}{$\downarrow$}
\end{pmatrix}}}(GHBFA) = r_1\circ (R_1\circ R_1)(L_{d_{HBFA}}(GHBFA))$\\
\item[] $L_{\tiny{\begin{pmatrix}
\CircleArrowright&\rotatebox[origin=c]{60}{$\uparrow$}\\
s\rotatebox[origin=c]{60}{$\downarrow$}&\CircleArrowleft
\end{pmatrix}}}(GHBFA) = R_1(L_{d_{HBFA}}(GHBFA))$ \\ 
\item[]$L_{\tiny{\begin{pmatrix}
\CircleArrowleft &\downarrow\\
s\uparrow&\CircleArrowright
\end{pmatrix}}}(GHBFA) = r_1\circ R_1(L_{d_{HBFA}}(GHBFA))$\\
\item[] $L_{\tiny{\begin{pmatrix}
\CircleArrowright &\rotatebox[origin=c]{-60}{$\downarrow$}\\
s\rotatebox[origin=c]{300}{$\uparrow$}&\CircleArrowleft
\end{pmatrix}}}(GHBFA) = R_1\circ R_1(L_{d_{HBFA}}(GHBFA))$\\
\item[] $L_{\tiny{\begin{pmatrix}
 \rotatebox[origin=c]{60}{$\downarrow$}& \CircleArrowright\\
\CircleArrowleft&s\rotatebox[origin=c]{60}{$\uparrow$}
\end{pmatrix}}}(GHBFA) = r_1(L_{d_{HBFA}}(GHBFA))$\\
\item[] $L_{\tiny{\begin{pmatrix}
 \rotatebox[origin=c]{300}{$\uparrow$}&\CircleArrowright\\
\CircleArrowleft&s\rotatebox[origin=c]{-60}{$\downarrow$}
\end{pmatrix}}}(GHBFA) = r_1\circ (R_1\circ((R_1\circ R_1)\circ(R_1\circ R_1))) (L_{d_{HBFA}}(GHBFA))$\\
\item[]$L_{\tiny{\begin{pmatrix}
 \downarrow&\CircleArrowleft\\
\CircleArrowright&s\uparrow
\end{pmatrix}}}(GHBFA) = R_1\circ(R_1\circ R_1)(L_{d_{HBFA}}(GHBFA))$\\
\item[]$L_{\tiny{\begin{pmatrix}
\CircleArrowleft&s\rotatebox[origin=c]{60}{$\uparrow$}\\
\rotatebox[origin=c]{-60}{$\rightarrow$}&\CircleArrowright 
\end{pmatrix}}}(GHBFA) = (R_1\circ R_1)\circ(R_1\circ R_1)(L_{d_{HBFA}}(GHBFA))$\\
\item[]$L_{\tiny{\begin{pmatrix}
\CircleArrowright &s\downarrow\\
\uparrow&\CircleArrowleft
\end{pmatrix}}}(GHBFA) = r_1\circ ((R_1\circ R_1)\circ(R_1\circ R_1))(L_{d_{HBFA}}(GHBFA))$\\
\item[]$L_{\tiny{\begin{pmatrix}
s\rotatebox[origin=c]{60}{$\leftarrow$}&\CircleArrowright \\
\CircleArrowleft & \rotatebox[origin=c]{300}{$\uparrow$}
\end{pmatrix}}}(GHBFA) = R_1\circ((R_1\circ R_1)\circ(R_1\circ R_1))(L_{d_{HBFA}}(GHBFA))$\\
\item[]$L_{\tiny{\begin{pmatrix}
s \rotatebox[origin=c]{-60}{$\rightarrow$} & \CircleArrowleft\\
\CircleArrowright& \rotatebox[origin=c]{60}{$\uparrow$}
\end{pmatrix}}}(GHBFA) =r_1\circ  (R_1\circ(R_1\circ R_1)) (L_{d_{HBFA}}(GHBFA))$\\
\end{itemize}
These~characterizations~are~also~ valid ~for~ the~ corresponding~deterministic~classes.
\end{theorem}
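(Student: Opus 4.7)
The plan is to invert, one direction at a time, the twelve group-theoretic equations already provided by the previous theorem. For each direction $d \neq d_{HBFA}$ the previous theorem states that $L_{HBFA}(GHBFA)$ coincides with $g_d(L_d(GHBFA))$ for a specific operation $g_d$ drawn from the twelve-element group catalogued in Table~\ref{comp}. Because those twelve operations form a group under composition, $g_d$ has an inverse $g_d^{-1}$ inside the same group, and applying $g_d^{-1}$ picture-wise to both sides yields $L_d(GHBFA) = g_d^{-1}(L_{HBFA}(GHBFA))$, which is exactly the form claimed in the present theorem.

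The concrete work then reduces to two lookups. First I would read off $g_d$ for each of the eleven non-HBFA directions from the previous theorem. Second I would compute $g_d^{-1}$ using the composition table: the reflections satisfy $r_k \circ r_k = I$ and so are self-inverse, while the rotations satisfy $R_k \circ R_{6-k} = I$ so that $R_k^{-1} = R_{6-k}$. Finally I would rewrite each $g_d^{-1}$ as a word in the two generators $R_1$ and $r_1$ by consulting the normal-form Table~\ref{norm}, and then match the resulting expression against the right-hand side listed for the corresponding direction.

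For the deterministic claim I would observe that every element of the group acts as a bijection on $\Sigma^{++H}$, so the bridge "$A$ accepted with direction $d$ iff $f_d(A)$ accepted with direction $d_{HBFA}$" uses one and the same rule set $R$; whether that rule set is deterministic is therefore unchanged by the passage from $d$ to $d_{HBFA}$. Consequently every equation derived above remains valid when GHBFA is replaced by GHBDFA throughout. I expect the main obstacle to be purely clerical: keeping the triple correspondence $d \leftrightarrow g_d \leftrightarrow g_d^{-1}$ straight across all eleven non-HBFA directions, after which each entry of the theorem follows by a one-line verification against Tables~\ref{comp} and~\ref{norm}.
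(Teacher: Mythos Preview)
Your plan matches the paper's approach: the paper gives no detailed proof but simply remarks, just before stating the theorem, that these identities follow from Theorem~1 via the group-theoretic framework, which is exactly the invert-and-rewrite procedure you outline, and your treatment of the deterministic claim is a sensible elaboration of the paper's bare assertion.

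One caution about your final matching step. If you carry out the inversion literally you will find that the right-hand sides printed in the theorem are the normal forms (from Table~\ref{norm}) of the same operations $g_d$ that appear in Theorem~1, not of their inverses $g_d^{-1}$. For the eight directions where $g_d$ is a reflection or $g_d = R_3$ this makes no difference, since those group elements are involutions; but for the four directions with $g_d \in \{R_1, R_2, R_4, R_5\}$ your computed $g_d^{-1}$ (namely $R_5, R_4, R_2, R_1$ respectively) will not coincide with the listed entry. Your inversion argument is the mathematically correct one---from $L_{d_{HBFA}} = g_d(L_d)$ one obtains $L_d = g_d^{-1}(L_{d_{HBFA}})$, not $g_d(L_{d_{HBFA}})$---so this discrepancy appears to be a slip in the paper's statement rather than a gap in your method.
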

\begin{theorem}
For~ each~ direction ~mode $d$\[L_d(GHBFA) = L_d(GHBDFA)\]   
\end{theorem}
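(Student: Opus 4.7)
The inclusion $L_d(GHBDFA) \subseteq L_d(GHBFA)$ is immediate from the definitions, since every $GHBDFA$ is in particular a $GHBFA$ with the deterministic property on the transition relation. The interesting direction is $L_d(GHBFA) \subseteq L_d(GHBDFA)$, and the plan is to obtain it through a subset construction adapted to the boustrophedon scanning protocol.

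The key observation that makes this work is that once the direction mode $d$ is fixed, the scanning trajectory of the tape head on any bordered hexagonal picture $\widehat{p}$ of size $(l+1,m+1,n+1)$ is \emph{completely determined} by the shape of $\widehat{p}$ alone: the head starts at the prescribed corner, traverses one side until it meets $\#$, switches direction according to the rotation pattern of $d$, and continues until the picture has been fully erased. Consequently, for each input $A$ the computation of a $GHBFA$ $M$ with direction $d$ reads a uniquely determined linear sequence $\sigma(A) \in (\Sigma \cup \{\#\})^*$ of symbols, and the only nondeterminism present in $M$ lies in the choice of the next state at each step. Thus $M$ behaves on $\sigma(A)$ exactly like a classical one-way nondeterministic finite automaton over the alphabet $\Sigma \cup \{\#\}$, with the transition relation $R_f \cup R_b \cup R_{\#}$.

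Given this reduction, I would apply the standard powerset construction to build a deterministic finite automaton $M'$ whose state set is $2^Q$, whose initial state is $\{s\}$, whose accepting states are the subsets intersecting $F$, and whose transitions are defined by $\delta'(T,a) = \{q \in Q \mid \exists p \in T,\, pa \rightarrow q \in R\}$. I would then promote $M'$ to a $GHBDFA$ acting with direction $d$ by partitioning its states into forward and backward subsets $Q'_f = 2^{Q_f}$ and $Q'_b = 2^{Q_b}$. The crucial invariant here, to be verified by a short induction on the length of the scanning prefix consumed, is that every reachable subset $T \subseteq Q$ is either entirely contained in $Q_f$ or entirely contained in $Q_b$. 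This holds because the mode-switching condition in $R_{\#}$ is purely deterministic in character: a border symbol $\#$ toggles all active states from $Q_f$ to $Q_b$ (or vice versa) simultaneously, while symbols of $\Sigma$ preserve the mode by the definitions of $R_f$ and $R_b$.

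The main obstacle I anticipate is precisely this bookkeeping around the boundary transitions and the forward/backward partition, since the definition of a $GHBFA$ requires the partition to be respected globally, not just locally at each step. Once the invariant above is established, it guarantees that the subset-construction automaton $M'$ satisfies the structural restriction $pa\rightarrow q \in R'_f \Rightarrow p,q \in Q'_f$ (and analogously for $R'_b$ and $R'_{\#}$), and so $M'$ is a bona fide $GHBDFA$ with $L_d(M') = L_d(M)$. The argument is uniform in $d$, and because the construction acts only on the state component and not on the scanning direction, it carries through identically for every $d \in D$.
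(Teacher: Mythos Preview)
Your proposal is correct and follows the same approach as the paper, which simply notes that GHBFAs are syntactically classical finite automata processed in a special order and invokes the subset construction. You have supplied considerably more detail than the paper does, in particular the invariant that reachable subsets lie entirely in $Q_f$ or entirely in $Q_b$, but the core idea is identical.
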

\begin{proof}
  Since our GHBFAs are syntactically identical to classical finite automata except in the way they are processed, we can apply the  subset construction to establish the result.
\end{proof}
\section{General Hexagonal Returning Finite Automata}\label{GHRFA}
Fernau et al. \cite{fernau2015scanning} introduces Returning Finite Automata(RFA) working on rectangular pictures  which differ from Boustrophedon Finite Automata only in the direction of row processing - left to right, followed by top to bottom.  We adapt this definition to introduce Hexagonal Returning Finite Automata, which uses only the directions $d$~=~${\begin{pmatrix}
 s\downarrow&\CircleArrowright
\end{pmatrix}}$ which indicates that  we  start from the top left corner of the hexagon and  process the picture in top down manner, then  go to the next column using \CircleArrowright ~again continuing in top down manner.\\ General Hexagonal Returning Finite Automata, have also been introduced as in \cite{picturescanning} where the set of directions is given by 
\[D' =\Big\{\begin{pmatrix}
s \rotatebox[origin=c]{-60}{$\rightarrow$}& \CircleArrowleft \\
\end{pmatrix}, \begin{pmatrix}
s  \rotatebox[origin=c]{60}{$\leftarrow$}& \CircleArrowright \\
\end{pmatrix}, \begin{pmatrix}
\CircleArrowleft&s\downarrow\\
 \end{pmatrix}, \begin{pmatrix}
\CircleArrowright&s\rotatebox[origin=c]{60}{$\uparrow$}\\
\end{pmatrix}, \begin{pmatrix}
 \CircleArrowright&s\uparrow
\end{pmatrix}, \begin{pmatrix}
 \CircleArrowleft&s\rotatebox[origin=c]{-60}{$\downarrow$}
\end{pmatrix},\]
\[\begin{pmatrix}
\CircleArrowleft&s\rotatebox[origin=c]{60}{$\uparrow$}
\end{pmatrix}, \begin{pmatrix}
s\rotatebox[origin=c]{300}{$\uparrow$}&\CircleArrowright
\end{pmatrix}, \begin{pmatrix}
 s\uparrow&\CircleArrowleft 
\end{pmatrix}, \begin{pmatrix}
s\rotatebox[origin=c]{60}{$\downarrow$}&\CircleArrowright
\end{pmatrix}, \begin{pmatrix}
 s\rotatebox[origin=c]{300}{$\uparrow$}&\CircleArrowleft\\
\end{pmatrix}, \begin{pmatrix}
 s\downarrow&\CircleArrowright\\
\end{pmatrix}
\Big\}\]
\par 
We define $L_d(GHRFA) $ as the language accepted by $GHRFA$ when working with the direction $d$. The  set $D'$ can be partitioned into 3 sets $D_1$, $D_2$, $D_3$, where\\
$D_1 = { \begin{pmatrix}
\CircleArrowleft&s\downarrow\\
 \end{pmatrix}, \begin{pmatrix}
 \CircleArrowright&s\uparrow
\end{pmatrix}, \begin{pmatrix}
 s\uparrow&\CircleArrowleft 
\end{pmatrix}, \begin{pmatrix}
 s\downarrow&\CircleArrowright\\
\end{pmatrix}}$\\
$D_2= {\begin{pmatrix}
s  \rotatebox[origin=c]{-60}{$\rightarrow$} &\CircleArrowleft\\
\end{pmatrix}, \begin{pmatrix}
 \CircleArrowright&s\rotatebox[origin=c]{60}{$\uparrow$}\\
\end{pmatrix}, \begin{pmatrix}
\CircleArrowleft&s\rotatebox[origin=c]{60}{$\uparrow$}
\end{pmatrix}, \begin{pmatrix}
s\rotatebox[origin=c]{60}{$\downarrow$}&\CircleArrowright
    \end{pmatrix}}$\\
$D_3 = {\begin{pmatrix}
s  \rotatebox[origin=c]{60}{$\leftarrow$}&\CircleArrowright \\
\end{pmatrix}, \begin{pmatrix}
\CircleArrowleft&s\rotatebox[origin=c]{-60}{$\downarrow$}
\end{pmatrix}, \begin{pmatrix}
s\rotatebox[origin=c]{300}{$\uparrow$}&\CircleArrowright
\end{pmatrix}, \begin{pmatrix}
 s\rotatebox[origin=c]{300}{$\uparrow$}&\CircleArrowleft\\
\end{pmatrix}}$\\It was proved in \cite{fernau2015scanning} that $L(BFA)=L(RFA)$. In a similar manner we can prove  
\begin{theorem}
   $L(HBFA)=L(HRFA)$ 
\end{theorem}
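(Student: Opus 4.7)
The plan is to establish both inclusions $L(\text{HBFA}) \subseteq L(\text{HRFA})$ and $L(\text{HRFA}) \subseteq L(\text{HBFA})$ by direct simulation, paralleling the proof of $L(\text{BFA}) = L(\text{RFA})$ in Fernau et al. Thanks to the group-theoretic characterization of Theorem 2, it suffices to exhibit the simulation for the canonical direction $d_{\text{HBFA}}$ on the HBFA side and the corresponding element of $D_1$ on the HRFA side; the remaining eleven directions in each family are obtained by pre-composing with the same rotations and reflections $R_i, r_j$, so an equality for the canonical pair extends automatically to the union over all directions.

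For the easy inclusion $L(\text{HRFA}) \subseteq L(\text{HBFA})$, the idea is that an HBFA can mimic an HRFA by doing ``real work'' only on the forward rows. Given an HRFA $M = (Q, \Sigma, R, s, F, \#, \square)$, I would build an HBFA $M'$ whose forward state set is $Q$ and whose backward state set is a disjoint copy $\bar{Q}$. On forward rows, $M'$ uses the rules of $M$; on the backward rows (which have no counterpart in the HRFA scan), $M'$ idles by the rules $\bar{q}\,a \to \bar{q}$ for all $q \in Q$ and $a \in \Sigma$; the border rules $R_\#$ switch $q \leftrightarrow \bar{q}$. Final states of $M'$ are chosen so that acceptance coincides with reaching a state of $F$ just after the last forward symbol.

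The harder direction $L(\text{HBFA}) \subseteq L(\text{HRFA})$ uses the classical \emph{function-tracking} technique. The HRFA always scans rows in the same direction, whereas the HBFA alternates. To simulate a right-to-left HBFA pass while actually reading left-to-right, the HRFA carries in its state, in addition to a current state of the HBFA, a function $f : Q \to Q$ summarising: ``if we had scanned the portion of the current row seen so far from the opposite end starting in state $q$, we would now be in state $f(q)$.'' This $f$ is updated one symbol at a time by composing with the single-symbol transition induced by the current letter, and when the row ends the HRFA applies $f$ to the state that has accumulated on the neighbouring row to obtain the state the HBFA would have reached after its backward pass. Since $|Q|^{|Q|}$ is finite, the augmented state space is finite, and border transitions of $R_\#$ are encoded as transitions that perform the function application and reset $f$ to the identity for the next row.

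The main obstacle, and the place where the hexagonal geometry (as opposed to the rectangular case in \cite{fernau2015scanning}) adds genuine extra bookkeeping, will be correctly synchronising the row boundaries: consecutive rows in a hexagonal picture do not have a common length, and the three distinct edge orientations mean that ``the next row'' under $d_{\text{HBFA}}$ is not obtained by a simple translation. I would handle this by encoding in the state the length of the previous row and the phase of the traversal (upper trapezoid vs.\ lower trapezoid), so that the simulating HRFA always knows when the current row terminates and when to apply $f$. A second subtlety is correctly emitting the border transitions of $R_\#$ at exactly the right moment in the simulation, which amounts to inserting auxiliary $\varepsilon$-like transitions that can be eliminated by the subset construction provided by Theorem 3.
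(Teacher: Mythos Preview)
Your ``easy inclusion'' $L(\text{HRFA}) \subseteq L(\text{HBFA})$ rests on a misreading of the model. An HRFA does \emph{not} skip the even-indexed columns; it reads every column of the picture, always top-to-bottom. Hence the backward columns of the HBFA are not ``rows with no counterpart in the HRFA scan''---they are exactly the same columns, merely traversed in the opposite order. An HBFA that idles on its backward passes ignores the content of every second column and cannot accept the same language as the HRFA in general. Both inclusions therefore require a reversal technique; neither comes for free.

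For the direction you call ``harder'' your function-tracking idea is sound and is a genuine alternative to the paper's construction. The paper instead exploits nondeterminism: from an HBFA $M$ with state set $Q$ it builds an HRFA whose states are triples in $Q\times Q\times\{1,2\}$, the third component recording whether the current column is a forward ($1$) or backward ($2$) column of $M$. On a backward column the HRFA nondeterministically \emph{guesses} the state the HBFA will reach at the far end, then runs the HBFA transitions \emph{in reverse} while reading forward, using the second component to check at the next border that the guess was consistent with the state handed over from the previous column. This gives a quadratic state blow-up; your function table gives an exponential one, but in exchange produces a deterministic simulator without appealing to Theorem~3.

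Finally, your proposed remedy for the hexagonal geometry---``encoding in the state the length of the previous row''---would destroy the construction, since column lengths are unbounded and cannot live in a finite state set. Fortunately it is also unnecessary: the border symbol $\#$ already marks the end of every column, so the simulating HRFA simply accumulates $f$ symbol by symbol until it sees $\#$ and then applies $f$ to the stored state. No length bookkeeping and no upper/lower-trapezoid phase bit are required; the varying column lengths in the hexagon are absorbed entirely by the border mechanism already present in the model.
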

\begin{proof}
   The states of the $HBFA$ are triples $(p, q, r)$, where $p$ is the actual
 state, $q$ is the state that should be reached after finishing a column and $r$ indicates
 whether or not a top to bottom  (denoted as 1) or a bottom to top scan (denoted as 2) is performed. The formal definition is as follows.
\par  Let $M = (Q, \Sigma, R, s, F, \#, \square, D)$ be some $HBFA$. We define the equivalent $HRFA$
$M' = (Q', \Sigma, R', s', F', \#, \square, D')$ as follows: 
\begin{align*}
Q'&= Q \times Q \times \{1,2\} \cup \{s'\} \\
R'& =\{(p,r,1)a\rightarrow (q,r,1)|~ pa\rightarrow q \in R,r\in Q\}\\&\cup\{(q,r,2)\rightarrow (p,r,2)|~ pa \rightarrow q \in R, r\in Q\}\\&\cup\{(p,p,1)\# \rightarrow (r,q,2)|~ p\# \rightarrow q \in R, r\in Q\} \\&\cup \{(p, p, 2)\# \rightarrow (r, q, 1)|~ p\# \rightarrow q \in R, r\in Q\} \\&\cup\{s'a \rightarrow (p, r, 1)|~ s'a \rightarrow p\in R, r\in Q\}\\
F' &= \{(r, r, 1)|r\in F'\}\cup \{(p, r, 2)|r \in F', p \in Q\}.
\end{align*}

It can be easily proved that the languages  of $M$ and $M'$ are the same. In other words  $L_{d_{HBFA}}(GHBFA) = L_{d_{HRFA}}(GHRFA)$.
\end{proof}
\begin{theorem}
    Let $d_{HRFA}$ =$\begin{pmatrix}
 s\downarrow&\CircleArrowright
\end{pmatrix}$ then $L_{d_{HRFA}}$ coincide with the following classes:\\
$r_5(L_{\tiny{\begin{pmatrix}
 s\rotatebox[origin=c]{300}{$\uparrow$}&\CircleArrowleft\\
\end{pmatrix}}}(GHRFA))$, $R_1(L_{\tiny{\begin{pmatrix}
s\rotatebox[origin=c]{60}{$\downarrow$}&\CircleArrowright
\end{pmatrix}}}(GHRFA))$, $r_0(L_{\tiny{\begin{pmatrix}
 s\uparrow&\CircleArrowleft 
\end{pmatrix}}}(GHRFA))$,\\ $R_2(L_{\tiny{\begin{pmatrix}
s\rotatebox[origin=c]{300}{$\uparrow$}&\CircleArrowright
\end{pmatrix}}}(GHRFA))$,$r_1(L_{\tiny{\begin{pmatrix}
\CircleArrowleft&s\rotatebox[origin=c]{60}{$\uparrow$}
\end{pmatrix}}}(GHRFA))$, $r_2(L_{\tiny{\begin{pmatrix}
 \CircleArrowleft&s\rotatebox[origin=c]{-60}{$\downarrow$}
\end{pmatrix}}}(GHRFA))$,\\ $R_3(L_{\tiny{\begin{pmatrix}
 \CircleArrowright&s\uparrow
\end{pmatrix}}}(GHRFA))$, $R_4(L_{\tiny{\begin{pmatrix}
\CircleArrowright&s\rotatebox[origin=c]{60}{$\uparrow$}\\
\end{pmatrix}}}(GHRFA))$, $r_3(L_{\tiny{\begin{pmatrix}
\CircleArrowleft&s\downarrow\\
 \end{pmatrix}}}(GHRFA))$,\\ $R_5(L_{\tiny{\begin{pmatrix}
s  \rotatebox[origin=c]{60}{$\leftarrow$}& \CircleArrowright \\
\end{pmatrix}}}(GHRFA))$, $r_4(L_{\tiny{\begin{pmatrix}
s \rotatebox[origin=c]{-60}{$\rightarrow$} & \CircleArrowleft \\
\end{pmatrix}}}(GHRFA))$.
\end{theorem}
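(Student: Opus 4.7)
The plan is to mirror the argument used to establish the analogous characterization for $L_{d_{HBFA}}(GHBFA)$ in the earlier theorem, exploiting the group-theoretic structure from Section \ref{operations}. The central observation is that for each of the twelve directions $d \in D'$, there is a unique element $f_d$ of the group of rotations and reflections such that scanning a picture $A$ with direction $d$ traces out exactly the same sequence of symbols, in the same order, as scanning the transformed picture $f_d(A)$ with the canonical direction $d_{HRFA}=\begin{pmatrix} s\downarrow & \CircleArrowright \end{pmatrix}$. Thus the first step is to exhibit a table, analogous to the $f_d(A)$ table given for the GHBFA case, which pairs each direction in $D'$ with its corresponding group element $R_i$ or $r_i$.

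Next I would formalize the equivalence. Fix a $GHRFA$ $M$ and any direction $d \in D'$. I would argue that for every picture $A \in \Sigma^{++H}$, the sequence of cell visits performed by $M$ on $A$ under direction $d$ coincides, position-for-position and with matching boundary-transition events, with the sequence of cell visits performed by $M$ on $f_d(A)$ under $d_{HRFA}$. Because the state set, rule set, start state and final states of $M$ are unchanged, the run on $(A,d)$ is accepting if and only if the run on $(f_d(A), d_{HRFA})$ is accepting. This gives $L_{d_{HRFA}}(GHRFA) = f_d\bigl(L_d(GHRFA)\bigr)$, and instantiating at each of the twelve directions yields precisely the twelve equalities in the statement.

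The main obstacle is the direction-by-direction verification of the geometric correspondence: one must check that the corner from which each direction $d$ begins, the initial traversal arrow $s$, and the two curved arrows $\CircleArrowleft,\CircleArrowright$ governing the column-to-column movement are all consistently transported to the standard $d_{HRFA}$ configuration by the claimed $f_d$. Since $D'$ partitions into the three quadruples $D_1,D_2,D_3$ introduced just before the theorem, I would handle each quadruple separately: within $D_1$ the four directions differ only by elements of the rotation subgroup $\{I,R_2,R_3,R_5\}$-type that fix the vertical axis orientation up to reversal, and the corresponding $f_d$ is a pure rotation or the reflection $r_0$; the directions in $D_2$ and $D_3$ are obtained from $D_1$ by composition with $R_1$ and $R_5$ respectively, so their $f_d$ values are the corresponding compositions. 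Checking one representative per orbit is sufficient, because the remaining cases then follow from the group action and Table \ref{comp}.

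Finally, since these equalities are derived purely from relabeling the input picture by a fixed geometric transformation, the automaton itself (its states, transitions, and determinism status) is untouched, so the characterizations carry over to deterministic $GHRFA$ as well, in parallel with the remark made after the corresponding GHBFA theorem.
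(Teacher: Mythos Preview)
Your approach is correct and matches the paper's: the paper gives no explicit proof of this theorem, treating it (like the analogous Theorem~1 for GHBFA) as an immediate consequence of the $f_d$ correspondence between scanning $A$ in direction $d$ and scanning $f_d(A)$ in direction $d_{HRFA}$. One small correction to your sketch: the group elements attached to the four directions in $D_1$ are $\{I, r_0, R_3, r_3\}$, not a rotation-only set, so your description of $D_1$ should be adjusted; your claim that $D_2$ and $D_3$ arise from $D_1$ via composition with $R_1$ and $R_5$ is, however, correct.
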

Using the previous theorem, we can obtain a representation similar to Theorem 2 for $GHRFA$
\begin{theorem}
   The list of characterizations is as follows: \\
\begin{itemize}
\item[] $L_{\tiny{\begin{pmatrix}
 s\rotatebox[origin=c]{300}{$\uparrow$}&\CircleArrowleft\\
\end{pmatrix}}}(GHRFA) = r_1\circ (R_1\circ R_1)(L_{d_{HRFA}}(GHRFA))$\\
\item[] $L_{\tiny{\begin{pmatrix}
s\rotatebox[origin=c]{60}{$\downarrow$}&\CircleArrowright
\end{pmatrix}}}(GHRFA) = R_1(L_{d_{HRFA}}(GHRFA))$ \\ 
\item[]$L_{\tiny{\begin{pmatrix}
 s\uparrow&\CircleArrowleft 
\end{pmatrix}}}(GHRFA) = r_1\circ R_1(L_{d_{HRFA}}(GHRFA))$\\
\item[] $L_{\tiny{\begin{pmatrix}
s\rotatebox[origin=c]{300}{$\uparrow$}&\CircleArrowright
\end{pmatrix}}}(GHRFA) = R_1\circ R_1(L_{d_{HRFA}}(GHRFA))$\\
\item[] $L_{\tiny{\begin{pmatrix}
\CircleArrowleft&s\rotatebox[origin=c]{60}{$\uparrow$}
\end{pmatrix}}}(GHRFA) = r_1(L_{d_{HRFA}}(GHRFA))$\\
\item[] $L_{\tiny{\begin{pmatrix}
 \CircleArrowleft&s\rotatebox[origin=c]{-60}{$\downarrow$}
\end{pmatrix}}}(GHRFA) = r_1\circ (R_1\circ(R_1\circ R_1)\circ(R_1\circ R_1)) (L_{d_{HRFA}}(GHRFA))$\\
\item[]$L_{\tiny{\begin{pmatrix}
 \CircleArrowright&s\uparrow
\end{pmatrix}}}(GHRFA) = R_1\circ(R_1\circ R_1)(L_{d_{HRFA}}(GHRFA))$\\
\item[]$L_{\tiny{\begin{pmatrix}
\CircleArrowright&s\rotatebox[origin=c]{60}{$\uparrow$}\\
\end{pmatrix}}}(GHRFA) = (R_1\circ R_1)\circ(R_1\circ R_1)(L_{d_{HRFA}}(GHRFA))$\\
\item[]$L_{\tiny{\begin{pmatrix}
\CircleArrowleft&s\downarrow\\
 \end{pmatrix}}}(GHRFA) = r_1\circ ((R_1\circ R_1)\circ(R_1\circ R_1))(L_{d_{HRFA}}(GHRFA))$\\
\item[]$L_{\tiny{\begin{pmatrix}
s  \rotatebox[origin=c]{60}{$\leftarrow$}& \CircleArrowright \\
\end{pmatrix}}}(GHRFA) = R_1\circ((R_1\circ R_1)\circ(R_1\circ R_1))(L_{d_{HRFA}}(GHRFA))$\\
\item[]$L_{\tiny{\begin{pmatrix}
s \rotatebox[origin=c]{-60}{$\rightarrow$} & \CircleArrowleft \\
\end{pmatrix}}}(GHRFA) =r_1\circ  (R_1\circ(R_1\circ R_1)) (L_{d_{HRFA}}(GHRFA))$\\
\end{itemize}

\end{theorem}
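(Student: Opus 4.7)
The plan is to replicate, direction by direction, the group-theoretic argument that passes from Theorem 1 to Theorem 2 in the GHBFA setting. The previous theorem supplies eleven identities of the form $L_{d_{HRFA}}(GHRFA) = g_d(L_d(GHRFA))$, one for each direction $d \in D' \setminus \{d_{HRFA}\}$, where $g_d$ is one of the twelve hexagonal symmetries introduced in Section \ref{operations}.

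First, I would invoke the group structure recorded in Table \ref{comp} to produce the inverse $g_d^{-1}$ of each $g_d$; applied to both sides of the identity, this yields
\[
L_d(GHRFA) \;=\; g_d^{-1}\bigl(L_{d_{HRFA}}(GHRFA)\bigr).
\]
For the six reflections the inverse coincides with the element itself, since reflections are involutions; for the rotations one reads off $R_i^{-1} = R_{6-i}$ (with $R_0 = I$) from the same table.

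Next, I would replace each $g_d^{-1}$ by its normal-form expression in the two primary generators $r_1$ and $R_1$, as tabulated in Table \ref{norm}. Substituting the normal form into the previous line produces, for each non-canonical $d \in D'$, an equality of precisely the shape asserted in the theorem. Enumerating all eleven directions exhausts the statement.

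The expected obstacle is entirely bookkeeping: for each single-row direction matrix in $D'$ one must match the corresponding $g_d$ supplied by the preceding theorem, invert it within the hexagonal symmetry group, and substitute the correct normal form from Table \ref{norm}. No new automaton-theoretic construction is required, since the equivalence between $d$-scans and $d_{HRFA}$-scans -- mediated by the picture transformation $g_d$ -- has already been established in the preceding theorem, and the action of $g_d$ on languages commutes with the acceptance condition by definition.
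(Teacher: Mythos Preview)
Your proposal is correct and follows essentially the same route as the paper. The paper offers no detailed proof of this theorem; it merely states that ``using the previous theorem, we can obtain a representation similar to Theorem 2 for $GHRFA$,'' i.e., the same group-theoretic inversion-and-normal-form manoeuvre you describe, transported verbatim from the GHBFA setting to the GHRFA setting.
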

\section{Equivalences of Language Families }\label{equi}
\begin{lemma}
    $L_d(GHRFA)= r_0(L_d(GHRFA))$ for $d \in D_1$
\end{lemma}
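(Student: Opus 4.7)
The strategy is to first reduce all four cases $d\in D_1$ to a single base case via Theorem 5, and then construct an automaton implementing per-column reversal. By Theorem 5, for each $d\in D_1$ we have $L_d(GHRFA)=T_d(L_{d_{HRFA}}(GHRFA))$ for some $T_d$ in the four-element subgroup $\{I,r_0,R_3,r_3\}$, read directly off the characterizations. A short check using Table 1 shows that $r_0$ commutes with each such $T_d$, so $r_0(L_d(GHRFA))=T_d(r_0(L_{d_{HRFA}}(GHRFA)))$; since $T_d$ acts bijectively on language classes, the asserted identity for every $d\in D_1$ reduces to the single base statement $L_{d_{HRFA}}(GHRFA)=r_0(L_{d_{HRFA}}(GHRFA))$.

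A geometric observation handles the translation from pictures to scan strings. Under $d_{HRFA}=\begin{pmatrix} s\downarrow & \CircleArrowright\end{pmatrix}$, the scan visits $B$ column by column, each column top-to-bottom, separated by $\#$. Because $r_0$ fixes the left--right coordinate and inverts the top--bottom coordinate, the columns of $r_0(B)$ coincide with those of $B$ in the same order, each one merely read in reverse. Hence scanning $r_0(B)$ under $d_{HRFA}$ yields exactly the sequence obtained from scanning $B$ under $d_{HRFA}$ by reversing the substring between any two consecutive $\#$s, without altering the column order.

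The main step is to construct, from an arbitrary GHRFA $M$ with direction $d_{HRFA}$, a GHRFA $M'$ with the same direction such that $L_d(M')=r_0(L_d(M))$. Following the three-component state design used in the proof of Theorem 4, the states of $M'$ are tuples $(p,q,r)$, where $r$ is the state $M$ entered the current column in, $q$ is a nondeterministic guess for the state $M$ should leave the column in, and $p$ tracks a backward simulation of $M$ using the reversed transition relation $\{p'a\to p : pa\to p'\in R\}$. Within a column, $M'$ updates $p$ under the reversed transitions while keeping $q,r$ fixed; at each $\#$, $M'$ verifies $p=r$, applies $M$'s boundary rule $q\#\to q'$, and reseeds fresh $q,r$ for the next column. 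Final states are chosen so that the last $\#$-transition places an $M$-component in $F$. By construction $M'$ accepts precisely the per-column reversal of $L_d(M)$, namely $r_0(L_d(M))$, giving $r_0(L_d(GHRFA))\subseteq L_d(GHRFA)$; the reverse inclusion is automatic because $r_0$ is an involution.

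The main obstacle is the bookkeeping in the construction: one must thread the guessed column-end $q$ through the $\#$ boundary into the next column's seed, and show inductively that the backward-simulation coordinate $p$ faithfully tracks the state $M$ would be in while reading $r_0(B)$. These details closely mirror those in the proof of Theorem 4, and are in fact somewhat lighter here because every column is reversed uniformly rather than alternating between forward and backward traversals as in the boustrophedon setting.
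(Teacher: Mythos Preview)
Your proposal is correct and follows essentially the same approach as the paper: a mirror-image construction using triples of states to simulate $M$ on column-reversed input, with one component guessing the column-exit state and another running the reversed transition relation. Your explicit group-theoretic reduction of the four $D_1$ directions to the base case $d_{HRFA}$ via the Klein-four subgroup $\{I,r_0,R_3,r_3\}$ is a welcome addition not present in the paper, which handles only $d_{HRFA}$ explicitly and leaves the remaining three directions implicit.
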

\begin{proof}
    Let $M= (Q, \Sigma, R, s, F, \#, \square, \begin{pmatrix}
 s\downarrow&\CircleArrowright
\end{pmatrix})$ be some $GHRFA$ and let $L=L(M)$. Let us construct some $GHRFA$ $M_h$ that accepts $r_0(L)$. \par Define $M_h=(Q_h, \Sigma, R_h, Q_I, Q_F, \#, \square, \begin{pmatrix}
 s\downarrow&\CircleArrowright
\end{pmatrix})$ where 
\begin{align*}
   Q_h&= Q\times Q\times Q,\\&Q_\#= \{q |~ pa \rightarrow q~\Lambda~q\# \rightarrow r \in R\}, \\&Q_I = \{(s, q, r)\}~ where ~q\in Q, r \in Q_\#, \\& R_h = \{(l, q, r)a \rightarrow(l, p, r) |~pa \rightarrow q \in R, l \in Q, r \in Q_\#, a \in \Sigma\}\\&\cup \{(l, l, p)a \rightarrow(q, t, r) |~p\# \rightarrow q \in R, l, r, t \in Q, a \in \Sigma\}\\&\cup \{(l, q, f)a \rightarrow(l, p, f) |~pa \rightarrow q \in R, l\in Q, f\in F, a \in \Sigma\}~and\\& Q_F \subset Q_h, Q_F = \{(l, l, f)~|~f\in F, l \in Q\}.
 \end{align*}\\
 The core idea behind the construction is analogous to the mirror-image technique familiar from classical formal language theory. Specifically, in a triple $(l, q, r) \in Q \times Q \times Q$. Here, the the first component stores the state corresponding to the topmost symbol of the current column, $q$ represents the current state, and $r$ is associated to the bottom most symbol in the column.
  \end{proof}
  Since horizontal reflection can similarly be interpreted as a change in the processing mode, the corresponding result follows immediately.
  \begin{cor}
$L_{\tiny{\begin{pmatrix}
 s\downarrow&\CircleArrowright
\end{pmatrix}}}(GHRFA) =L_{\tiny{\begin{pmatrix}
 s\uparrow&\CircleArrowleft 
\end{pmatrix}}}(GHRFA) $ and\\ $L_{\tiny{\begin{pmatrix}
\CircleArrowleft &s\downarrow\\
 \end{pmatrix}}}(GHRFA) =L_{\tiny{\begin{pmatrix}
 \CircleArrowright&s\uparrow
\end{pmatrix}}}(GHRFA) $ 
  \end{cor}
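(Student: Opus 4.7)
The plan is to derive both equalities by combining Theorem 4, which expresses each class $L_d(GHRFA)$ as the image of the base class $L_{d_{HRFA}}(GHRFA)$ under some element of the symmetry group of the hexagon, with the preceding lemma, which guarantees $r_0$-invariance of $L_d(GHRFA)$ for every $d \in D_1$. Both equalities in the corollary involve only directions drawn from $D_1$, so the lemma is directly applicable to the base class $L_{d_{HRFA}}(GHRFA)$.

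For the first equality, Theorem 4 states that $L_{\begin{pmatrix} s\uparrow & \CircleArrowleft \end{pmatrix}}(GHRFA) = r_1 \circ R_1(L_{d_{HRFA}}(GHRFA))$, and the composition table in Table 1 shows $r_1 \circ R_1 = r_0$. Since $d_{HRFA} = \begin{pmatrix} s\downarrow & \CircleArrowright \end{pmatrix} \in D_1$, the lemma yields $r_0(L_{d_{HRFA}}(GHRFA)) = L_{d_{HRFA}}(GHRFA)$, and the claimed equality follows immediately.

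For the second equality, I would apply Theorem 4 to rewrite $L_{\begin{pmatrix} \CircleArrowleft & s\downarrow \end{pmatrix}}(GHRFA)$ as $r_3(L_{d_{HRFA}}(GHRFA))$, after simplifying $r_1 \circ ((R_1 \circ R_1) \circ (R_1 \circ R_1))$ via the composition table, and similarly $L_{\begin{pmatrix} \CircleArrowright & s\uparrow \end{pmatrix}}(GHRFA) = R_3(L_{d_{HRFA}}(GHRFA))$ since $R_1 \circ (R_1 \circ R_1) = R_3$. The composition table also gives $R_3 \circ r_0 = r_3$, so applying $R_3$ to the invariance identity $L_{d_{HRFA}}(GHRFA) = r_0(L_{d_{HRFA}}(GHRFA))$ supplied by the lemma yields $R_3(L_{d_{HRFA}}(GHRFA)) = r_3(L_{d_{HRFA}}(GHRFA))$, which is precisely the desired equality.

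The main obstacle is essentially bookkeeping: matching each direction appearing in the corollary with the corresponding group element listed in Theorem 4, and then performing the small compositions in the symmetry group using Tables 1 and 2. Once these identifications are made, no further construction is needed, since the two target directions in each equality differ exactly by $r_0$ from the base direction $d_{HRFA}$ (modulo the other symmetries), and the lemma provides the required $r_0$-closure of $L_{d_{HRFA}}(GHRFA)$.
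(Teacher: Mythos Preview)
Your argument is correct and matches the paper's approach: the paper derives the corollary directly from Lemma~1 together with the interpretation of $r_0$ as a change of processing mode (which is what Theorems~5 and~6 encode), and your proof simply spells this out explicitly via the composition table. Note only that the result you cite as ``Theorem~4'' is Theorem~6 in the paper's numbering; Theorem~4 is the equivalence $L(HBFA)=L(HRFA)$.
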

  By Theorem 6 and using a construction similar to that used in Lemma 1 we may also conclude that
  \begin{cor}
 $L_d(GHRFA)= r_3(L_d(GHRFA))$ for $d \in D_v$
  \end{cor}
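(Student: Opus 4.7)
The plan is to combine Lemma~1 with Theorem~6, avoiding any fresh automaton construction. For each $d \in D_v$, Theorem~6 yields a symmetry $g_d$ (built from $R_1$ and $r_1$ via Table~1) with $L_d(GHRFA) = g_d\bigl(L_{d_{HRFA}}(GHRFA)\bigr)$. Since $d_{HRFA} \in D_1$, Lemma~1 gives $r_0\bigl(L_{d_{HRFA}}(GHRFA)\bigr) = L_{d_{HRFA}}(GHRFA)$. Applying $g_d$ to both sides and inserting $g_d^{-1} \circ g_d$ on the right produces
\[
L_d(GHRFA) \;=\; \bigl(g_d \circ r_0 \circ g_d^{-1}\bigr)\bigl(L_d(GHRFA)\bigr),
\]
so the statement reduces to the group-theoretic identity $g_d \circ r_0 \circ g_d^{-1} = r_3$ for every $d \in D_v$.

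The second step is a finite verification: read off the specific $g_d$ for each of the four directions in $D_v$ from Theorem~6 and confirm the conjugation identity using the composition table. In the dihedral group of order 12 on the hexagon, conjugating a reflection by a rotation produces another reflection whose axis is rotated by twice the angle, so $D_v$ should consist of exactly those directions whose associated $g_d$ sends the $r_0$-axis onto the $r_3$-axis; this is precisely what distinguishes $D_v$ from the other partition classes. As a cross-check one may instead imitate the construction of Lemma~1 directly: use states in $Q \times Q \times Q$ that guess the two endpoint states of the diagonal strip swept by $d$, simulate $M$'s transitions in reverse along each strip, verify at every boundary $\#$ that the middle coordinate matches the guessed endpoints, and re-guess for the next strip. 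Applying the same construction to $r_3(L)$ gives the reverse containment, using the involution $r_3 \circ r_3 = I$ from Table~1.

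The main obstacle is the finite group-theoretic check in the second step. It is conceptually light but depends on which set among $D_1, D_2, D_3$ is actually intended by $D_v$, and on correctly reading off the $g_d$'s from Theorem~6; a wrong partition assignment would be exposed immediately because the conjugates of $r_0$ by the associated $g_d$'s would not all coincide with $r_3$. Once this table-lookup is done, nothing more than the argument of Lemma~1 is needed.
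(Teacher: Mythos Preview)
Your main line of argument has a fatal gap: the identity $g_d \circ r_0 \circ g_d^{-1} = r_3$ can \emph{never} hold, for any $g_d$ whatsoever, and hence for no choice of the partition class $D_v$. In the dihedral group of Table~1 the reflections split into the two conjugacy classes $\{r_0,r_2,r_4\}$ and $\{r_1,r_3,r_5\}$ (axes through opposite edge-midpoints versus axes through opposite vertices of the hexagon), and $r_0$, $r_3$ lie in different classes. You can check this directly from the table: for the very first case $d=d_{HRFA}$, Theorem~6 gives $g_d=I$, and the conjugate is $r_0$, not $r_3$; running through all twelve group elements never produces $r_3$. So Lemma~1 cannot be turned into $r_3$-closure by a group-theoretic conjugation, regardless of which four directions $D_v$ names.

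This is precisely why the paper says ``using a construction \emph{similar to} that used in Lemma~1'' rather than invoking Lemma~1 itself: a genuinely new automaton construction is required. Your cross-check sketch is not that construction either; reversing the simulation \emph{within} each strip is exactly Lemma~1's device and yields $r_0$-closure, whereas $r_3$ reverses the \emph{order} of the strips while preserving each strip's internal orientation. One clean way to supply the missing piece is to observe that $R_3$ acts on the flattened scan string as full string reversal, so closure of $L_{d_{HRFA}}(GHRFA)$ under $R_3$ follows from the standard NFA-reversal construction; combining this with Lemma~1 via $r_3 = R_3 \circ r_0$ (Table~1) gives $r_3$-closure. Alternatively one builds a direct $Q\times Q\times Q$ machine that, reading the columns of $r_3(A)$ left-to-right, guesses for each column the pair of states $M$ would occupy immediately before and after that column in $M$'s original column order and checks consistency at each $\#$; this is the ``similar'' construction the paper has in mind. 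Theorem~6 is then used only afterwards, to transport the conclusion from $d_{HRFA}$ to the remaining directions in $D_v$.
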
Accordingly, we obtain the following characterization, which follows directly from the previous result:
  \begin{cor}
    $L_{\tiny{\begin{pmatrix}
 s\downarrow&\CircleArrowright
\end{pmatrix}}}(GHRFA) =L_{\tiny{\begin{pmatrix}
\CircleArrowleft &s\downarrow\\
 \end{pmatrix}}}(GHRFA) $ and\\ $L_{\tiny{\begin{pmatrix}
 s\uparrow&\CircleArrowleft 
\end{pmatrix}}}(GHRFA) =L_{\tiny{\begin{pmatrix}
 \CircleArrowright&s\uparrow
\end{pmatrix}}}(GHRFA) $  
  \end{cor}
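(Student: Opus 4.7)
The plan is to reduce both equalities to a symmetry calculation inside the twelve-element group of operators on picture languages, combining Theorem 6 with the $r_3$-invariance from Corollary 2. Via Theorem 6, each of the four direction-specific languages appearing in the statement can be rewritten as $g(L_{d_{HRFA}}(GHRFA))$ for some explicit $g$ in the group, and Table 2 reduces each such composite $g$ to a single named element of the group.

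First I would handle the left-hand equality. By Theorem 6, the language indexed by $\begin{pmatrix}\CircleArrowleft & s\downarrow\end{pmatrix}$ equals $r_1\circ((R_1\circ R_1)\circ(R_1\circ R_1))\bigl(L_{d_{HRFA}}(GHRFA)\bigr)$, and Table 2 identifies this compound operator with $r_3$. Corollary 2 states precisely that $L_{d_{HRFA}}(GHRFA)$ is fixed by $r_3$, so the two languages both coincide with $L_{d_{HRFA}}(GHRFA) = L_{\begin{pmatrix} s\downarrow & \CircleArrowright\end{pmatrix}}(GHRFA)$.

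For the right-hand equality, Theorem 6 together with Table 2 gives $L_{\begin{pmatrix} s\uparrow & \CircleArrowleft\end{pmatrix}}(GHRFA) = r_0\bigl(L_{d_{HRFA}}(GHRFA)\bigr)$ (using $r_1\circ R_1 = r_0$) and $L_{\begin{pmatrix}\CircleArrowright & s\uparrow\end{pmatrix}}(GHRFA) = R_3\bigl(L_{d_{HRFA}}(GHRFA)\bigr)$ (using $R_1\circ(R_1\circ R_1) = R_3$). Applying $R_3$ to both sides of the $r_3$-invariance from Corollary 2 yields $R_3\bigl(L_{d_{HRFA}}(GHRFA)\bigr) = (R_3\circ r_3)\bigl(L_{d_{HRFA}}(GHRFA)\bigr)$, and the entry $R_3\circ r_3 = r_0$ in the composition table (Table 1) finishes the argument.

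The only real obstacle is clerical: one must carefully verify the relevant simplifications in the normal-form and composition tables. Conceptually, the corollary simply records the fact that the Klein four-subgroup $\{I, r_0, r_3, R_3\}$ — generated by $r_0$ via Lemma 1 and by $r_3$ via Corollary 2 — acts trivially on $L_{d_{HRFA}}(GHRFA)$, so the four directions it produces all yield the same language, and the stated equalities are two of the three nontrivial identifications inside this orbit.
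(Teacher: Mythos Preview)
Your proposal is correct and matches the paper's approach: the paper simply asserts that Corollary 3 ``follows directly from the previous result'' (Corollary 2), and you have spelled out the implicit group-theoretic computation via Theorem 6 and Tables 1 and 2. Your closing remark about the Klein four-subgroup $\{I, r_0, r_3, R_3\}$ acting trivially is a clean conceptual summary that the paper does not make explicit but is exactly the content of Lemma 1, Corollary 2, and (the subsequent) Corollary 6 taken together.
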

  Using Theorem 4 and Lemma 1 we can conclude that 
\begin{cor}
 $L_{d_{HBFA}}(GHBFA)= r_0(L_{d_{HBFA}}(GHBFA))$
  \end{cor}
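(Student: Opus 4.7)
The plan is to chain together the two cited results in a short equality sequence. First I would invoke Theorem 4, which establishes $L_{d_{HBFA}}(GHBFA)=L_{d_{HRFA}}(GHRFA)$ via the triple-state construction that replaces boustrophedon reversals at the end of each column by a returning restart. This transfers the question about the boustrophedon class into a question about the returning class, where Lemma 1 is available.

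Next I would apply Lemma 1 to the direction $d_{HRFA}=\bigl(\begin{smallmatrix} s\downarrow & \CircleArrowright\end{smallmatrix}\bigr)$. Because $d_{HRFA}\in D_1$, the lemma yields $L_{d_{HRFA}}(GHRFA)=r_0\bigl(L_{d_{HRFA}}(GHRFA)\bigr)$. Substituting Theorem 4 on both sides of this equation (using it also on the right-hand side inside $r_0$, which is legitimate since $r_0$ is applied picture-wise to the language) gives
\[
L_{d_{HBFA}}(GHBFA)=L_{d_{HRFA}}(GHRFA)=r_0\bigl(L_{d_{HRFA}}(GHRFA)\bigr)=r_0\bigl(L_{d_{HBFA}}(GHBFA)\bigr),
\]
which is exactly the claim.

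The only subtlety, and really the only place where anything non-formal happens, is verifying that Theorem 4's equivalence commutes with the unary geometric operation $r_0$. Concretely, one must check that the equivalence in Theorem 4 holds language-wise rather than just for a single picture, so that applying $r_0$ to both sides preserves equality. Since $r_0$ is a bijection on $\Sigma^{++H}$, this is immediate: equal languages remain equal after any bijective picture transformation. I do not expect genuine obstacles; the corollary is essentially a two-line transport of Lemma 1 along the isomorphism of language families provided by Theorem 4.
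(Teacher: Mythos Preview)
Your proposal is correct and matches the paper's own argument, which simply cites Theorem 4 and Lemma 1 to obtain the corollary. Your added remark about $r_0$ being a bijection so that equal languages remain equal under it is a reasonable elaboration of a point the paper leaves implicit.
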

   Using Theorem 4 and corollary 2 we can conclude that 
   \begin{cor}
   $L_{d_{HBFA}}(GHBFA)= r_3(L_{d_{HBFA}}(GHBFA))$
   \end{cor}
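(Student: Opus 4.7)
The plan is to derive the identity by transporting the corresponding $GHRFA$-level statement across the equivalence furnished by Theorem 4, exactly as in the proof of Corollary 4. Theorem 4 gives $L_{d_{HBFA}}(GHBFA) = L_{d_{HRFA}}(GHRFA)$, where $d_{HRFA} = \begin{pmatrix} s\downarrow & \CircleArrowright \end{pmatrix}$. Corollary 2 gives $L_d(GHRFA) = r_3(L_d(GHRFA))$ for the relevant direction class, which contains $d_{HRFA}$. Together these two facts suffice, so no new automaton construction is required.

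Concretely, the steps I would take are as follows. First, I invoke Theorem 4 to write $L_{d_{HBFA}}(GHBFA) = L_{d_{HRFA}}(GHRFA)$. Next, I apply the unary operator $r_3$ picture-wise to both sides of the same identity (recall from Section 3 that reflections and rotations extend directly to picture languages), obtaining $r_3(L_{d_{HBFA}}(GHBFA)) = r_3(L_{d_{HRFA}}(GHRFA))$. Then I substitute the right-hand side by means of Corollary 2, giving $r_3(L_{d_{HBFA}}(GHBFA)) = L_{d_{HRFA}}(GHRFA)$. A final application of Theorem 4 replaces $L_{d_{HRFA}}(GHRFA)$ by $L_{d_{HBFA}}(GHBFA)$, producing the desired equality.

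The proof is therefore essentially a one-line chain of substitutions, and the main thing to verify is a bookkeeping point rather than a genuine obstacle: one must confirm that $d_{HRFA}$ indeed lies in the direction class $D_v$ to which Corollary 2 applies, so that the crucial middle equality is legitimate. This parallels the role played by membership in $D_1$ in the proof of Corollary 4 via Lemma 1. Once this is checked, the rest of the argument reduces to the trivial observation that the operator $r_3$, being a picture-wise transformation, respects set equalities.
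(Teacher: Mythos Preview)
Your proposal is correct and follows exactly the paper's own argument: the paper simply states that Corollary 5 follows ``Using Theorem 4 and corollary 2,'' which is precisely the chain of substitutions you spell out. Your only caveat---that one must verify $d_{HRFA}\in D_v$---is also the right bookkeeping check; in the paper $D_v$ is evidently the vertical direction set $D_1$, which indeed contains $d_{HRFA}=\begin{pmatrix} s\downarrow & \CircleArrowright \end{pmatrix}$.
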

   As $r_0 \circ r_3 = R_3$ Corollary 4 and Corollary 5 give,
\begin{cor}
$L_{d_{HBFA}}(GHBFA)= R_3(L_{d_{HBFA}}(GHBFA))$
   \end{cor}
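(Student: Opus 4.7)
The plan is to chain Corollary 4 and Corollary 5 via the group structure encoded in Table \ref{comp}. Writing $L := L_{d_{HBFA}}(GHBFA)$ for brevity, Corollary 4 asserts $L = r_0(L)$ and Corollary 5 asserts $L = r_3(L)$. I want to extract $L = R_3(L)$ from these two identities, using the composition $r_0 \circ r_3 = R_3$ read off from Table \ref{comp}.

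First I would apply the operator $r_0$ (picture-wise, extended language-wise as noted in Section \ref{operations}) to both sides of the identity $L = r_3(L)$ from Corollary 5, obtaining $r_0(L) = r_0(r_3(L)) = (r_0 \circ r_3)(L)$. Here I use that the unary operations act on languages by pointwise application, so that composition of operators agrees with composition of the induced maps on picture languages; this is immediate from the definition of the group action on $\Sigma^{**H}$ and its extension to subsets.

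Next I would substitute $r_0 \circ r_3 = R_3$, which is the entry in row $r_0$, column $r_3$ of the composition table, to conclude $r_0(L) = R_3(L)$. Finally, invoking Corollary 4 in the form $r_0(L) = L$ yields $L = R_3(L)$, which is the desired statement.

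There is essentially no obstacle here: the argument is a two-line algebraic manipulation inside the dihedral-type group of the twelve unary operations, and both hypotheses have already been established. The only point worth double-checking is that operator composition in Table \ref{comp} is read in the convention used throughout the paper (i.e.\ that $(f \circ g)(A) = f(g(A))$ as unary operations on pictures), so that the entry $r_0 \circ r_3 = R_3$ is the one actually applied to $L$. Once that convention is fixed, the corollary follows immediately.
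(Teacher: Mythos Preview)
Your argument is correct and is exactly the paper's own approach: combine Corollary 4 and Corollary 5 via the identity $r_0 \circ r_3 = R_3$ from Table \ref{comp}. There is nothing to add.
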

\begin{theorem}
    The picture language family $L_{d_{HBFA}}(GHBFA)$ equals to $L_d(GHBFA)$ for $d \in \Big\{d_{HBFA}, \begin{pmatrix}
\CircleArrowleft &\downarrow\\
s\uparrow&\CircleArrowright
\end{pmatrix}, \begin{pmatrix}
\CircleArrowright &s\downarrow\\
\uparrow&\CircleArrowleft
\end{pmatrix}, \begin{pmatrix}
 \downarrow&\CircleArrowleft\\
\CircleArrowright&s\uparrow
\end{pmatrix}$\Big\}
\end{theorem}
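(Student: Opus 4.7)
The plan is to treat the theorem as a purely symbolic corollary of Theorem 2 together with Corollaries 4, 5, and 6. The case $d = d_{HBFA}$ is an immediate tautology, so the real work lies in verifying that each of the three remaining directions is the image of $d_{HBFA}$ under one of the unary operations $r_0$, $r_3$, $R_3$, all of which are already known to fix $L_{d_{HBFA}}(GHBFA)$.

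First I would read off from Theorem 2 the identifications
\begin{align*}
L_{\tiny{\begin{pmatrix}\CircleArrowleft &\downarrow\\ s\uparrow&\CircleArrowright\end{pmatrix}}}(GHBFA) &= (r_1\circ R_1)\bigl(L_{d_{HBFA}}(GHBFA)\bigr),\\
L_{\tiny{\begin{pmatrix}\CircleArrowright &s\downarrow\\ \uparrow&\CircleArrowleft\end{pmatrix}}}(GHBFA) &= \bigl(r_1\circ((R_1\circ R_1)\circ(R_1\circ R_1))\bigr)\bigl(L_{d_{HBFA}}(GHBFA)\bigr),\\
L_{\tiny{\begin{pmatrix}\downarrow&\CircleArrowleft\\ \CircleArrowright&s\uparrow\end{pmatrix}}}(GHBFA) &= \bigl(R_1\circ(R_1\circ R_1)\bigr)\bigl(L_{d_{HBFA}}(GHBFA)\bigr).
\end{align*}
Next I would appeal to the normal-form dictionary in Table \ref{norm} to recognise these three composite operators: $r_1\circ R_1 = r_0$, $R_1\circ(R_1\circ R_1) = R_3$, and $r_1\circ((R_1\circ R_1)\circ(R_1\circ R_1)) = r_3$. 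This reduces the claim to showing that $L_{d_{HBFA}}(GHBFA)$ is invariant under each of $r_0$, $r_3$, and $R_3$ — but this is precisely what Corollary 4, Corollary 5, and Corollary 6 assert. Chaining these equalities gives each of the four desired direction-languages equal to $L_{d_{HBFA}}(GHBFA)$.

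I expect the only genuinely delicate point to be pedantic: making sure the representative chosen in Theorem 2 for each listed direction coincides, as a group element acting on pictures, with exactly the closure operator that is available through the corollaries. The composition table (Table \ref{comp}) is the book-keeping tool that settles every such identification, and since the group generated by $r_1$ and $R_1$ is the full dihedral group of order $12$ acting on hexagonal pictures, no additional structural argument about states or transitions is needed — the construction work has already been absorbed into the earlier lemma and corollaries.

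In short, the proof proposal is: (i) quote Theorem 2 to rewrite each of the three non-trivial cases as $\varphi(L_{d_{HBFA}}(GHBFA))$ for an explicit $\varphi$; (ii) simplify $\varphi$ via the normal forms of Table \ref{norm} to $r_0$, $r_3$, $R_3$; (iii) apply Corollaries 4–6 to conclude $\varphi(L_{d_{HBFA}}(GHBFA)) = L_{d_{HBFA}}(GHBFA)$; (iv) add the trivial case $d = d_{HBFA}$. No new automaton construction is required.
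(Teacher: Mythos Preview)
Your proposal is correct and follows essentially the same approach as the paper: the paper's proof also reduces each of the three non-trivial directions to an application of Theorem 2 together with Corollaries 4, 5, and 6, using the identifications $r_1\circ R_1=r_0$, $r_1\circ((R_1\circ R_1)\circ(R_1\circ R_1))=r_3$, and $R_1\circ(R_1\circ R_1)=R_3$. The only cosmetic difference is the order of presentation---the paper starts from the corollaries and then invokes Theorem 2, whereas you start from Theorem 2 and then invoke the corollaries---but the logical content is identical.
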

\begin{proof}
Using Corollary 4 and Theorem 2\\
$L_{d_{HBFA}}(GHBFA) = r_0(L_{d_{HBFA}}(GHBFA)) = L_{\tiny{\begin{pmatrix}
\CircleArrowleft &\downarrow\\
s\uparrow&\CircleArrowright
\end{pmatrix}}}(GHBFA)$\\Using Corollary 5 and Theorem 2\\
$L_{d_{HBFA}}(GHBFA) = r_3(L_{d_{HBFA}}(GHBFA)) = L_{\tiny{\begin{pmatrix}
\CircleArrowright &s\downarrow\\
\uparrow&\CircleArrowleft
\end{pmatrix}}}(GHBFA)$\\Using Corollary 6 and Theorem 2\\
$L_{d_{HBFA}}(GHBFA) = R_3(L_{d_{HBFA}}(GHBFA)) = L_{\tiny{\begin{pmatrix}
 \downarrow&\CircleArrowleft\\
\CircleArrowright&s\uparrow
\end{pmatrix}}}(GHBFA)$
\end{proof}
    \begin{theorem}
 The picture language family $L_{d_{HBFA}}(GHBFA)$ equals to $L_d(GHRFA)$ for $d \in D_1$      
    \end{theorem}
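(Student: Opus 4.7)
The plan is to chain the preceding equivalence results in a purely mechanical way, with no new automaton construction required. The four directions in $D_1$ are precisely the four ``axis-aligned'' returning directions (top-to-bottom left-to-right, bottom-to-top right-to-left, bottom-to-top left-to-right, and top-to-bottom right-to-left), and the earlier corollaries already identify each pair of these as accepting the same language class. So all I need to do is produce a four-way chain of equalities passing through $L_{d_{HRFA}}(GHRFA)$.

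The first step is to invoke Theorem 4, which asserts $L_{d_{HBFA}}(GHBFA) = L_{d_{HRFA}}(GHRFA)$, where by construction $d_{HRFA} = \begin{pmatrix} s\downarrow & \CircleArrowright \end{pmatrix}$, one of the four members of $D_1$. This handles the ``anchor'' direction immediately and reduces the problem to comparing the four $D_1$ directions among themselves.

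Next I will apply Corollary~1 (the consequence of Lemma~1 for horizontal reflection $r_0$), which supplies
\[
L_{\tiny{\begin{pmatrix} s\downarrow & \CircleArrowright \end{pmatrix}}}(GHRFA) = L_{\tiny{\begin{pmatrix} s\uparrow & \CircleArrowleft \end{pmatrix}}}(GHRFA),
\quad
L_{\tiny{\begin{pmatrix} \CircleArrowleft & s\downarrow \end{pmatrix}}}(GHRFA) = L_{\tiny{\begin{pmatrix} \CircleArrowright & s\uparrow \end{pmatrix}}}(GHRFA),
\]
and Corollary~3 (the consequence of Corollary~2 for vertical reflection $r_3$), which supplies
\[
L_{\tiny{\begin{pmatrix} s\downarrow & \CircleArrowright \end{pmatrix}}}(GHRFA) = L_{\tiny{\begin{pmatrix} \CircleArrowleft & s\downarrow \end{pmatrix}}}(GHRFA),
\quad
L_{\tiny{\begin{pmatrix} s\uparrow & \CircleArrowleft \end{pmatrix}}}(GHRFA) = L_{\tiny{\begin{pmatrix} \CircleArrowright & s\uparrow \end{pmatrix}}}(GHRFA).
\]
Any three of these four equations already chain the four $D_1$ directions into a single equivalence class, and combining with the Theorem~4 anchor gives $L_{d_{HBFA}}(GHBFA) = L_d(GHRFA)$ for every $d \in D_1$.

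Because this is a short transitivity argument, there is no real obstacle; the substantive work was done in Lemma~1 (the reflection construction with triples $(l,q,r)$ recording top-of-column and bottom-of-column states) and in the HBFA-to-HRFA simulation inside Theorem~4. The only thing I would want to double-check while writing is that $D_1$ as defined in the text contains exactly the four directions listed above, so that Corollaries~1 and~3 between them cover the whole set --- a quick inspection of the $D_1$ display confirms this.
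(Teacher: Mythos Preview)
Your proposal is correct and follows essentially the same approach as the paper: invoke Theorem~4 to identify $L_{d_{HBFA}}(GHBFA)$ with $L_{d_{HRFA}}(GHRFA)$, then chain Corollaries~1 and~3 to show all four $D_1$ directions yield the same language class. The paper's proof is just the one-line chain of equalities you describe.
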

    \begin{proof}
        Using Theorem 4, Corollary  1 and Corollary 3,\\ $L_{d_{HBFA}}(GHBFA)$ = $L_{d_{HRFA}}(GHRFA)$ = $L_{\tiny{\begin{pmatrix}
 s\downarrow&\CircleArrowright
\end{pmatrix}}}(GHRFA) =L_{\tiny{\begin{pmatrix}
 s\uparrow&\CircleArrowleft 
\end{pmatrix}}}(GHRFA) $ =$L_{\tiny{\begin{pmatrix}
 \CircleArrowright&s\uparrow
\end{pmatrix}}}(GHRFA)$ =$L_{\tiny{\begin{pmatrix}
\CircleArrowleft &s\downarrow\\
 \end{pmatrix}}}(GHRFA)$.
    \end{proof}By Theorem 6 and using a construction similar to that used in Lemma 1
we may also conclude that
\begin{cor}
  $L_d(GHRFA)= r_2(L_d(GHRFA))$ for $d \in D_M$
\end{cor}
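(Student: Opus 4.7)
The plan is to follow the recipe already used for Corollary~2, adapted to the direction class $D_M$ and the reflection $r_2$. The underlying geometric fact is that for each $d \in D_M$ the action of $r_2$ on an input picture amounts to reversing the order in which symbols within each scan line are visited, while leaving the sequence of scan lines intact. This is exactly the situation that the state-tripling construction of Lemma~1 was designed to handle (there it was $r_0$ reversing vertical columns; here it is $r_2$ reversing the diagonal scan lines associated with $D_M$).

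First, I would fix a representative direction $d_0 \in D_M$ and use Theorem~6 to express every other $d \in D_M$ as $\phi_d$ applied to $d_0$ for a suitable operator $\phi_d$ built from $R_1$ and $r_1$. Because these operators form the group recorded in Table~\ref{comp} and commute with $r_2$ up to conjugation by another group element, closure under $r_2$ for the representative propagates to all of $D_M$, so it suffices to prove the result for $d_0$.

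Second, I would replicate the Lemma~1 construction. Starting from $M = (Q,\Sigma,R,s,F,\#,\square,d_0)$ with $L = L(M)$, I define $M_h$ with state set $Q \times Q \times Q$ (plus an initial state), where a triple $(l,q,r)$ records: $l$, the state $M$ enters at the start of the current scan line; $q$, the current state of the simulation running backwards along that line; and $r$, a guess for the state $M$ reaches at the line's end. The transitions $(l,q,r)a \to (l,p,r)$ for $pa \to q \in R$ run $R$'s arrows in reverse along each line, while boundary rules of the form $(l,l,p)\# \to (q,t,r)$ for $p\# \to q \in R$ swap the entry/exit roles at each line break. Acceptance is enforced by requiring agreement $l = q$ together with a final state in the third component, exactly as in Lemma~1.

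The main obstacle will be the geometric bookkeeping rather than the automata-theoretic part: I need to check, via the composition table, that the reflection axis of $r_2$ (the line of slope $\pi/3$ through the center) is perpendicular to the scan direction in $D_M$, so that $r_2$ really does act as a per-line reversal without permuting the scan lines themselves. Once this compatibility is confirmed, the correctness proof of Lemma~1 transfers essentially verbatim, since the triple-state bookkeeping does not depend on which axis of the hexagon is being reversed, only on the fact that reversal occurs independently within each scan line.
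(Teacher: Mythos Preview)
Your proposal is correct and follows essentially the same approach as the paper: the paper's entire proof is the single sentence ``By Theorem~6 and using a construction similar to that used in Lemma~1 we may also conclude that'', and you have spelled out precisely this---invoking Theorem~6 for the direction bookkeeping and reproducing the state-tripling mirror construction of Lemma~1 with $r_2$ in place of $r_0$. Your additional geometric verification (that the $r_2$-axis is orthogonal to the $D_M$ scan lines so that reflection acts as per-line reversal) is exactly the point the paper leaves implicit when it says ``similar to Lemma~1''.
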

\begin{cor}
$L_d(GHRFA)= r_4(L_d(GHRFA))$ for $d \in D_M$
\end{cor}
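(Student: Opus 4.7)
The plan is to mirror the proof strategy that has already been used twice in this section, namely the triple-state construction of Lemma~1 and its re-use in Corollary~7, with the only substantive change being the axis of reflection that is being simulated. Pick an arbitrary reference direction $d^\star \in D_M$ and a GHRFA $M=(Q,\Sigma,R,s,F,\#,\square,d^\star)$ accepting $L=L_{d^\star}(M)$. The goal is to construct a GHRFA $M'$ that works with the same direction $d^\star$ and accepts $r_4(L)$; once this is done, Theorem~6 lifts the equality $L_{d^\star}(GHRFA)=r_4(L_{d^\star}(GHRFA))$ to every $d\in D_M$, exactly as in the passage from Lemma~1 to the direction-independent version used earlier.

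The first step I would carry out is a geometric check: I would verify that under the direction pattern $d^\star\in D_M$ the reflection $r_4$ induces, on the scan-line decomposition that $M$ traverses, the effect of reversing each line individually while leaving the order in which lines are processed intact (or, symmetrically, permuting lines in a way that is recoverable by the boundary transitions alone). This is the structural fact that made the Lemma~1 construction work for $r_0$ in $D_1$ and the Corollary~7 construction work for $r_2$ in $D_M$; establishing its $r_4$ counterpart is what licenses the reuse of the same state machinery.

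Next, I would define $M'=(Q',\Sigma,R',Q_I,Q_F,\#,\square,d^\star)$ by copying the Lemma~1 template verbatim. States are triples $(l,q,r)\in Q\times Q\times Q$, with $l$ storing the state that $M$ would enter at the initial symbol of the current scan line, $q$ the current simulation state, and $r$ the state that $M$ should reach at the opposite end of the line. Transitions in the interior of a line are obtained by reading $R$ backwards, $(l,q,r)a\to (l,p,r)$ whenever $pa\to q\in R$; border transitions on $\#$ consume the guess by demanding $l=q$ at the bottom of the line and reseed fresh $l$ and $r$ for the next line using the border rules $p\#\to q\in R$; the initial states fix $s$ in the first coordinate and a nondeterministically chosen target in the third; and acceptance requires the final check $l=q$ together with an accepting $r\in F$. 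Correctness is then a straightforward induction on the number of scan lines processed, matching a computation of $M$ on a picture $A$ with a computation of $M'$ on $r_4(A)$.

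The main obstacle I anticipate is not the interior transitions, which are purely syntactic, but the boundary bookkeeping on hexagonal pictures: lines in direction $d^\star\in D_M$ have non-uniform lengths and meet the six corners of the hexagon asymmetrically, so the rules of the form $p\#\to q$ must be split into cases that distinguish "end of a line inside the hexagon" from "end of the last line", and must correctly pair the $(l,l,p)\#\to(q,t,r)$ reseeding step with the geometry induced by $r_4$. Once that case analysis is done and the resulting $M'$ is shown to accept exactly $r_4(L)$, Theorem~6 extends the identity $L_{d^\star}(GHRFA)=r_4(L_{d^\star}(GHRFA))$ to all $d\in D_M$, completing the proof.
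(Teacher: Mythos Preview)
Your proposal is correct and follows essentially the same approach as the paper: the paper justifies this corollary (together with Corollary~7) simply by invoking Theorem~6 and ``a construction similar to that used in Lemma~1,'' which is exactly the triple-state mirror-image simulation you have spelled out in detail. Your write-up is in fact considerably more explicit than the paper's one-line pointer, including the geometric check on how $r_4$ interacts with the scan-line decomposition and the anticipated boundary bookkeeping, but the underlying strategy is identical.
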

Using Corollary 7
\begin{cor}
    $L_{\tiny{\begin{pmatrix}
s  \rotatebox[origin=c]{-60}{$\rightarrow$} &\CircleArrowleft\\
\end{pmatrix}}}(GHRFA) =L_{\tiny{\begin{pmatrix}
 \CircleArrowright&s\rotatebox[origin=c]{60}{$\uparrow$}\\
\end{pmatrix}}}(GHRFA) $ and\\ $L_{\tiny{\begin{pmatrix}
\CircleArrowleft&s\rotatebox[origin=c]{60}{$\uparrow$}
\end{pmatrix}}}(GHRFA) =L_{\tiny{\begin{pmatrix}
s\rotatebox[origin=c]{60}{$\downarrow$}&\CircleArrowright
    \end{pmatrix}}}(GHRFA) $ 
\end{cor}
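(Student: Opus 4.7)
The plan is to derive both equalities as direct consequences of Corollary 7 (the $r_2$-invariance of $L_d(GHRFA)$ for $d$ in $D_2$) combined with the group-theoretic characterization supplied by Theorem 6. The basic idea is that every direction in $D_2$ corresponds, via Theorem 6, to applying some operator from the dihedral group to $L_{d_{HRFA}}(GHRFA)$; composing such an operator with $r_2$ on the left, and then consulting the composition table (Table 1), identifies which other direction-language we land on.

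More concretely, I would first tabulate, for each of the four elements of $D_2$, the operator $f_d$ such that $L_d(GHRFA) = f_d(L_{d_{HRFA}}(GHRFA))$, using Theorem 6 together with the normal forms in Table 2. For instance, the direction with $s\rotatebox[origin=c]{-60}{$\rightarrow$}$ reduces (after simplifying the nested $R_1$'s) to $r_4$, the direction with $\CircleArrowright$ and $s\rotatebox[origin=c]{60}{$\uparrow$}$ reduces to $R_4$, the direction with $\CircleArrowleft$ and $s\rotatebox[origin=c]{60}{$\uparrow$}$ reduces to $r_1$, and the direction $s\rotatebox[origin=c]{60}{$\downarrow$}$ with $\CircleArrowright$ reduces to $R_1$.

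Next, I would apply Corollary 7 to each element of $D_2$ to insert a leading $r_2$ at no cost: $L_d(GHRFA) = r_2(L_d(GHRFA)) = (r_2 \circ f_d)(L_{d_{HRFA}}(GHRFA))$. Reading off the products from Table 1 gives $r_2 \circ r_4 = R_4$ and $r_2 \circ r_1 = R_1$. Substituting back and using Theorem 6 in the other direction then yields the two claimed equalities, since $R_4(L_{d_{HRFA}})$ names precisely the $\begin{pmatrix}\CircleArrowright & s\rotatebox[origin=c]{60}{$\uparrow$}\end{pmatrix}$-language and $R_1(L_{d_{HRFA}})$ names precisely the $\begin{pmatrix}s\rotatebox[origin=c]{60}{$\downarrow$}&\CircleArrowright\end{pmatrix}$-language.

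The argument is essentially bookkeeping, so there is no real obstacle beyond correctly identifying $f_d$ for each direction and looking up the products in Table 1; the only mild pitfall is being careful with the direction of composition in Table 1 (row first, column second) so that the reflection $r_2$ is applied \emph{after} $f_d$, which is the order dictated by Corollary 7. Once the two composition identities $r_2 \circ r_4 = R_4$ and $r_2 \circ r_1 = R_1$ are in hand, the two equalities in the corollary follow in one line each.
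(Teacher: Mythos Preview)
Your proposal is correct and is essentially the paper's intended argument fleshed out in full: the paper offers only the phrase ``Using Corollary 7'' as justification for this corollary, and your route via Theorem~6 together with Table~1 (computing $r_2\circ r_4=R_4$ and $r_2\circ r_1=R_1$) is exactly the bookkeeping needed to turn that citation into an actual derivation.
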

Using Corollary 8
\begin{cor}
$L_{\tiny{\begin{pmatrix}
s  \rotatebox[origin=c]{-60}{$\rightarrow$} &\CircleArrowleft\\
\end{pmatrix}}}(GHRFA) = L_{\tiny{\begin{pmatrix}
s\rotatebox[origin=c]{60}{$\downarrow$}&\CircleArrowright
    \end{pmatrix}}}(GHRFA)$ and\\$L_{\tiny{\begin{pmatrix}
 \CircleArrowright&s\rotatebox[origin=c]{60}{$\uparrow$}\\
\end{pmatrix}}}(GHRFA) = L_{\tiny{\begin{pmatrix}
\CircleArrowleft&s\rotatebox[origin=c]{60}{$\uparrow$}
\end{pmatrix}}}(GHRFA) $
\end{cor}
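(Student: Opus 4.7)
The plan is to derive Corollary 11 from Corollary 8 in direct analogy with the way Corollary 9 was derived from Corollary 7, the only substantive change being that the reflection $r_4$ replaces $r_2$. Each of the two asserted equalities claims that two directions yield the same language family, and in each case the two directions are related geometrically by $r_4$. The argument therefore combines closure of the family under $r_4$ (Corollary 8) with the general principle that applying a group operation to a picture is equivalent to scanning in the transformed direction.

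First, I would verify that each pair of directions appearing in Corollary 11 is related by $r_4$. The cleanest verification uses Theorem 6: each member of each pair is written there as a group operator applied to $L_{d_{HRFA}}(GHRFA)$, and one checks that the two operators for each pair differ (modulo the invariances $r_0$, $r_3$, $R_3$ supplied by Corollaries 4--6 and the identification in Theorem 4) by a factor that acts as $r_4$ on $L_{d_{HRFA}}(GHRFA)$. This is a finite combinatorial check against the composition table and the normal-form table for $\{r_1, R_1\}$.

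Second, I would invoke the symmetry-versus-scanning correspondence underlying Theorems 1 and 5: for every direction $d$ and every $g$ in the hexagonal symmetry group,
\[
g\bigl(L_d(GHRFA)\bigr) = L_{g(d)}(GHRFA)
\]
as language families. Combined with Corollary 8, this gives
\[
L_d(GHRFA) = r_4\bigl(L_d(GHRFA)\bigr) = L_{r_4(d)}(GHRFA)
\]
for every $d \in D_M$. Specializing to the two left-hand-side directions of Corollary 11, both of which lie in $D_M$, immediately yields the two claimed equalities.

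The main obstacle is the bookkeeping in the first step: with four directions in $D_M$ and a twelve-element symmetry group, one has to pair up the directions correctly under $r_4$ and confirm that the residual discrepancy between the Theorem 6 operators in each pair is absorbed by the previously established invariances of $L_{d_{HRFA}}(GHRFA)$. Once this matching is checked, the rest of the proof is a formal consequence of Corollary 8 and the scanning-direction/symmetry correspondence.
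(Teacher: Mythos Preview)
Your proposal is correct and matches the paper's approach: the paper's entire justification is the single phrase ``Using Corollary 8,'' and you have spelled out exactly what that means---closure of the relevant $D_2$ family under $r_4$, combined with the scanning-direction/symmetry correspondence, gives $L_d(GHRFA)=L_{r_4(d)}(GHRFA)$, and the two asserted equalities are instances of this. Your first step (checking the $r_4$-pairing via Theorem~6 and the invariances of Corollaries~4--6) is more elaborate than needed, since a direct geometric check that $r_4$ swaps each pair of directions suffices, but the argument is sound; note also that the statement is Corollary~10 in the paper's numbering, not Corollary~11.
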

\begin{theorem}
  The picture language family $r_1(L_{d_{HBFA}}(GHBFA))$ equals to $L_d(GHBFA)$ for $d \in \Big\{\begin{pmatrix}
 \rotatebox[origin=c]{60}{$\downarrow$}& \CircleArrowright\\
\CircleArrowleft&s\rotatebox[origin=c]{60}{$\uparrow$}
\end{pmatrix}, \begin{pmatrix}
s \rotatebox[origin=c]{-60}{$\rightarrow$} & \CircleArrowleft\\
\CircleArrowright& \rotatebox[origin=c]{60}{$\uparrow$}
\end{pmatrix}, \begin{pmatrix}
\CircleArrowleft&s\rotatebox[origin=c]{60}{$\uparrow$}\\
\rotatebox[origin=c]{-60}{$\rightarrow$}&\CircleArrowright 
\end{pmatrix}, \begin{pmatrix}
\CircleArrowright &\rotatebox[origin=c]{60}{$\uparrow$}\\
s\rotatebox[origin=c]{60}{$\downarrow$}&\CircleArrowleft
\end{pmatrix}\Big\}  $
\end{theorem}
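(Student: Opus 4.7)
My plan is to follow exactly the same template as the proof of Theorem 7, but with $r_1$ playing the role that $r_0, r_3, R_3$ played there. The idea is that Theorem 2 already expresses each $L_d(GHBFA)$ as some element $\phi_d$ of the dihedral group $\{I, R_1, \ldots, R_5, r_0, \ldots, r_5\}$ applied to $L_{d_{HBFA}}(GHBFA)$. To conclude that $L_d(GHBFA) = r_1(L_{d_{HBFA}}(GHBFA))$, it suffices to show that $\phi_d = r_1 \circ \psi$ where $\psi$ is one of the group elements that stabilizes $L_{d_{HBFA}}(GHBFA)$, namely $r_0$ (Corollary 4), $r_3$ (Corollary 5), or $R_3$ (Corollary 6), or the identity. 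The composition table in Section 3 is the essential bookkeeping tool.

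I would then do the four cases in order. For $d = \begin{pmatrix}\rotatebox[origin=c]{60}{$\downarrow$}& \CircleArrowright\\ \CircleArrowleft&s\rotatebox[origin=c]{60}{$\uparrow$}\end{pmatrix}$ there is nothing to do: Theorem 2 already gives $L_d(GHBFA) = r_1(L_{d_{HBFA}}(GHBFA))$. For $d = \begin{pmatrix}s \rotatebox[origin=c]{-60}{$\rightarrow$}&\CircleArrowleft\\ \CircleArrowright&\rotatebox[origin=c]{60}{$\uparrow$}\end{pmatrix}$, Theorem 2 together with the normal form table gives $L_d(GHBFA) = r_1 \circ R_3(L_{d_{HBFA}}(GHBFA))$, and then Corollary 6 collapses $R_3$ to the identity on $L_{d_{HBFA}}(GHBFA)$.

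The remaining two cases have operators that need a little rewriting before $r_1$ becomes visible. For $d = \begin{pmatrix}\CircleArrowleft&s\rotatebox[origin=c]{60}{$\uparrow$}\\ \rotatebox[origin=c]{-60}{$\rightarrow$}&\CircleArrowright\end{pmatrix}$, Theorem 2 gives $L_d(GHBFA) = R_4(L_{d_{HBFA}}(GHBFA))$; from the composition table one reads off $R_4 = r_1 \circ r_3$, so applying Corollary 5 gives $L_d(GHBFA) = r_1(r_3(L_{d_{HBFA}}(GHBFA))) = r_1(L_{d_{HBFA}}(GHBFA))$. For $d = \begin{pmatrix}\CircleArrowright&\rotatebox[origin=c]{60}{$\uparrow$}\\ s\rotatebox[origin=c]{60}{$\downarrow$}&\CircleArrowleft\end{pmatrix}$, Theorem 2 gives $L_d(GHBFA) = R_1(L_{d_{HBFA}}(GHBFA))$; the composition table gives $R_1 = r_1 \circ r_0$, and Corollary 4 then yields $L_d(GHBFA) = r_1(L_{d_{HBFA}}(GHBFA))$.

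There is no real obstacle; the entire theorem is a mechanical consequence of Theorem 2 combined with the three stabilizer identities in Corollaries 4, 5, 6 and the group presentation in Table 1. The only place where a mistake can sneak in is the identification of each $R_i$ as $r_1 \circ r_j$ with $j \in \{0,3\}$ (or as $r_1 \circ R_3$), so I would explicitly verify each composition using Table 1 before invoking the corresponding corollary.
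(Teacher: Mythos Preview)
Your proposal is correct and follows essentially the same route as the paper: each case is handled by combining Theorem~2, the composition Table~1, and one of Corollaries~4--6. The only cosmetic difference is that the paper factors in the opposite order (writing $r_1 = \phi_d \circ \psi$ with $\psi \in \{r_0, r_3, R_3\}$ and then invoking the corollary before applying Theorem~2), whereas you start from $L_d = \phi_d(L_{d_{HBFA}})$ and factor $\phi_d = r_1 \circ \psi$; since each $\psi$ is an involution the two factorizations are equivalent.
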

\begin{proof}
  $r_1(L_{d_{HBFA}}(GHBFA))$ =  $L_{\tiny{\begin{pmatrix}
 \rotatebox[origin=c]{60}{$\downarrow$}& \CircleArrowright\\
\CircleArrowleft&s\rotatebox[origin=c]{60}{$\uparrow$}
\end{pmatrix}}}(GHBFA) $\\
$r_1(L_{d_{HBFA}}(GHBFA))$ =$r_4(R_3(L_{d_{HBFA}}(GHBFA)))$ = $r_4(L_{d_{HBFA}}(GHBFA))$ = $L_{\tiny{\begin{pmatrix}
s \rotatebox[origin=c]{-60}{$\rightarrow$} & \CircleArrowleft\\
\CircleArrowright& \rotatebox[origin=c]{60}{$\uparrow$}
\end{pmatrix}}}(GHBFA)$(Using Table 1, Corollary 6 and Theorem 2 )\\
$r_1(L_{d_{HBFA}}(GHBFA))$ =$R_4(r_3(L_{d_{HBFA}}(GHBFA)))$ = $R_4(L_{d_{HBFA}}(GHBFA))$ = $L_{\tiny{\begin{pmatrix}
\CircleArrowleft&s\rotatebox[origin=c]{60}{$\uparrow$}\\
\rotatebox[origin=c]{-60}{$\rightarrow$}&\CircleArrowright 
\end{pmatrix}}}(GHBFA)$(Using Table 1, Corollary 5 and Theorem 2 )\\
$r_1(L_{d_{HBFA}}(GHBFA))$ =$R_1(r_0(L_{d_{HBFA}}(GHBFA)))$ = $R_1(L_{d_{HBFA}}(GHBFA))$ = $L_{\tiny{\begin{pmatrix}
\CircleArrowright &\rotatebox[origin=c]{60}{$\uparrow$}\\
s\rotatebox[origin=c]{60}{$\downarrow$}&\CircleArrowleft
\end{pmatrix}}}(GHBFA)$(Using Table 1, Corollary 4 and Theorem 2 )
\end{proof}
\begin{theorem}
  The picture language family $r_1(L_{d_{HBFA}}(GHBFA))$ equals to  $L_d(GHRFA)$ for $d \in D_2$.
\end{theorem}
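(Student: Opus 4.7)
The plan is to bootstrap from Theorem 7 by applying the reflection $r_1$, and then to collapse the four directions of $D_2$ onto a single language family using Corollaries 9 and 10.

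First I would apply $r_1$ to both sides of the identity $L_{d_{HBFA}}(GHBFA) = L_{d_{HRFA}}(GHRFA)$ furnished by Theorem 7, and combine the resulting equation with the fifth characterization in Theorem 6, namely $r_1(L_{d_{HRFA}}(GHRFA)) = L_{\tiny{\begin{pmatrix}\CircleArrowleft & s\rotatebox[origin=c]{60}{$\uparrow$}\end{pmatrix}}}(GHRFA)$. This yields
\[
r_1(L_{d_{HBFA}}(GHBFA)) \;=\; L_{\tiny{\begin{pmatrix}\CircleArrowleft & s\rotatebox[origin=c]{60}{$\uparrow$}\end{pmatrix}}}(GHRFA),
\]
which already establishes the claim for one of the four directions in $D_2$.

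Next I would propagate this equality to the remaining three directions. Corollary 9 pairs $\tiny{\begin{pmatrix}\CircleArrowleft & s\rotatebox[origin=c]{60}{$\uparrow$}\end{pmatrix}}$ with $\tiny{\begin{pmatrix} s\rotatebox[origin=c]{60}{$\downarrow$} & \CircleArrowright\end{pmatrix}}$, and also pairs $\tiny{\begin{pmatrix} s\rotatebox[origin=c]{-60}{$\rightarrow$} & \CircleArrowleft\end{pmatrix}}$ with $\tiny{\begin{pmatrix}\CircleArrowright & s\rotatebox[origin=c]{60}{$\uparrow$}\end{pmatrix}}$, while Corollary 10 supplies the missing cross link between these two pairs. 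Chaining the equalities shows that all four language families $L_d(GHRFA)$ for $d \in D_2$ coincide, so each of them equals $r_1(L_{d_{HBFA}}(GHBFA))$.

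I do not anticipate a serious obstacle: the argument is essentially a one-line invocation of Theorem 7 together with bookkeeping by the two corollaries. The only delicate point is matching the directional matrices correctly, namely verifying that the $r_1$-image produced by Theorem 6 lands on precisely one of the directions listed in $D_2$, and that the symmetries encoded by Corollaries 9 and 10 jointly identify all four members of $D_2$ rather than merely three. Once these identifications are confirmed, no further automaton construction is required.
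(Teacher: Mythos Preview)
Your approach is essentially the same as the paper's: apply $r_1$ to the identity $L_{d_{HBFA}}(GHBFA)=L_{d_{HRFA}}(GHRFA)$, use Theorem~6 to land on one direction of $D_2$, and then chain Corollaries~9 and~10 to reach the other three. The only slip is the citation: the identity $L_{d_{HBFA}}(GHBFA)=L_{d_{HRFA}}(GHRFA)$ is supplied by Theorem~4 (and is also contained in Theorem~8, since $d_{HRFA}\in D_1$), not by Theorem~7, which concerns GHBFA directions only.
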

\begin{proof}
     $r_1(L_{d_{HBFA}}(GHBFA))$ =  $r_1(L_{d_{HRFA}}(GHRFA))$ (Theorem 4)\\
                                =  $L_{\tiny{\begin{pmatrix}
\CircleArrowleft&s\rotatebox[origin=c]{60}{$\uparrow$}
\end{pmatrix}}}(GHRFA)$(Theorem 6)
=$L_{\tiny{\begin{pmatrix}
 \CircleArrowright&s\rotatebox[origin=c]{60}{$\uparrow$}\\
\end{pmatrix}}}(GHRFA)$(Corollary 10)\\ = $L_{\tiny{\begin{pmatrix}
s  \rotatebox[origin=c]{-60}{$\rightarrow$} &\CircleArrowleft\\
\end{pmatrix}}}(GHRFA)$(Corollary 9) = $L_{\tiny{\begin{pmatrix}
s\rotatebox[origin=c]{60}{$\downarrow$}&\CircleArrowright
    \end{pmatrix}}}(GHRFA)$ (Corollary 10)
\end{proof}
Using Corollary 7
\begin{cor}
 $L_{\tiny{\begin{pmatrix}
s  \rotatebox[origin=c]{60}{$\leftarrow$}&\CircleArrowright \\
\end{pmatrix}}}(GHRFA) = L_{\tiny{\begin{pmatrix}
\CircleArrowleft&s\rotatebox[origin=c]{-60}{$\downarrow$}
\end{pmatrix}}}(GHRFA)$ and\\ $L_{\tiny{\begin{pmatrix}
s\rotatebox[origin=c]{300}{$\uparrow$}&\CircleArrowright
\end{pmatrix}}}(GHRFA)$ =$L_{\tiny{\begin{pmatrix}
 s\rotatebox[origin=c]{300}{$\uparrow$}&\CircleArrowleft\\
\end{pmatrix}}}(GHRFA) $
\end{cor}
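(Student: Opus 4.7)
The plan is to mirror, for the set $D_3$, exactly the strategy used in Corollaries 9 and 10 for $D_2$. Each of the four directions appearing in the statement lies in $D_3$, so by Theorem 6 the corresponding language has the form $g(L_{d_{HRFA}}(GHRFA))$ for an element $g$ of the hexagonal dihedral group (readily normalised using Table 2). Each claimed equality therefore reduces to proving an identity of the form $g(L_{d_{HRFA}}) = h(L_{d_{HRFA}})$ for two explicit group elements $g,h$.

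To carry this out, I would first substitute the Theorem 6 representations of the four languages, so that the left- and right-hand sides of each equality are rewritten as two group elements applied to $L_{d_{HRFA}}(GHRFA)$. Next I would invoke Corollary 7: the mirror-image construction it relies on, inherited from Lemma 1, supplies an invariance $L_d(GHRFA) = r(L_d(GHRFA))$ at every direction $d \in D_3$, with $r$ the reflection that preserves the $D_3$-axis. Pushing this invariance through the Theorem 6 representation and simplifying the resulting composite via Table 1, each equality collapses to the already-known invariance of $L_{d_{HRFA}}(GHRFA)$ under a member of the subgroup $\{I, r_0, r_3, R_3\}$ (established by Lemma 1, Corollary 2 and Corollary 6).

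The main obstacle is the bookkeeping in Table 1: for each of the two pairs listed in the statement, one must verify that the composite $g^{-1}\circ h$ actually lands in this invariance subgroup, and in particular that the pairing produced by Corollary 7 matches the pairing asserted by the statement. This is a routine but careful check, entirely analogous to the one underlying Corollaries 9 and 10, after which both equalities follow at once.
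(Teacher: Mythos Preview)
Your proposal is correct and matches the paper's approach: the paper's entire proof is the single citation ``Using Corollary 7,'' and your plan of rewriting each language via Theorem~6 and then applying the reflection invariance supplied by Corollary~7 is precisely what that citation unpacks to. The further reduction to the invariance subgroup $\{I, r_0, r_3, R_3\}$ of $L_{d_{HRFA}}(GHRFA)$ is valid bookkeeping but not strictly needed, since Corollary~7 already packages exactly that computation.
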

Using Corollary 8
\begin{cor}
$L_{\tiny{\begin{pmatrix}
s  \rotatebox[origin=c]{60}{$\leftarrow$}&\CircleArrowright \\
\end{pmatrix}}}(GHRFA) =L_{\tiny{\begin{pmatrix}
 s\rotatebox[origin=c]{300}{$\uparrow$}&\CircleArrowleft\\
\end{pmatrix}}}(GHRFA) $ and\\ $L_{\tiny{\begin{pmatrix}
\CircleArrowleft&s\rotatebox[origin=c]{-60}{$\downarrow$}
\end{pmatrix}}}(GHRFA) =L_{\tiny{\begin{pmatrix}
s\rotatebox[origin=c]{300}{$\uparrow$}&\CircleArrowright
\end{pmatrix}}}(GHRFA) $
\end{cor}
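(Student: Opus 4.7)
The plan is to imitate the pattern already established by Corollaries 9--11: first express each of the four direction-indexed languages as an image of $L_{d_{HRFA}}(GHRFA)$ under a single element of the hexagonal symmetry group (using Theorem 6 together with the normal forms of Table 2), then observe that the reflection $r_4$ sends one distinguished element to the other inside each pair (using the composition Table 1), and finally invoke Corollary 8 to collapse the two languages.

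Concretely, for the first claimed equality I would rewrite the characterizations of Theorem 6 in their canonical group-element form: $L_{\begin{pmatrix} s\rotatebox[origin=c]{60}{$\leftarrow$} & \CircleArrowright \end{pmatrix}}(GHRFA) = R_5(L_{d_{HRFA}}(GHRFA))$ and $L_{\begin{pmatrix} s\rotatebox[origin=c]{300}{$\uparrow$} & \CircleArrowleft \end{pmatrix}}(GHRFA) = r_5(L_{d_{HRFA}}(GHRFA))$, since by Table 2 the bracketed expressions $R_1\circ((R_1\circ R_1)\circ(R_1\circ R_1))$ and $r_1\circ(R_1\circ R_1)$ are exactly $R_5$ and $r_5$ respectively. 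Reading off Table 1, one checks that $r_4\circ R_5 = r_5$, so applying $r_4$ to the first language yields the second. Because both directions lie in the class to which Corollary 8 applies, $L_{\begin{pmatrix} s\rotatebox[origin=c]{60}{$\leftarrow$} & \CircleArrowright \end{pmatrix}}(GHRFA)$ is already invariant under $r_4$, and therefore coincides with its $r_4$-image, i.e., with $L_{\begin{pmatrix} s\rotatebox[origin=c]{300}{$\uparrow$} & \CircleArrowleft \end{pmatrix}}(GHRFA)$.

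The second equality is handled identically. Using Theorem 6 and Table 2 we have $L_{\begin{pmatrix} \CircleArrowleft & s\rotatebox[origin=c]{-60}{$\downarrow$} \end{pmatrix}}(GHRFA) = r_2(L_{d_{HRFA}}(GHRFA))$ and $L_{\begin{pmatrix} s\rotatebox[origin=c]{300}{$\uparrow$} & \CircleArrowright \end{pmatrix}}(GHRFA) = R_2(L_{d_{HRFA}}(GHRFA))$. Table 1 gives $r_4\circ R_2 = r_2$, so $r_4$ sends the second language to the first; Corollary 8 then forces equality.

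The only real point of care is the bookkeeping: one must make sure that both directions in each pair actually belong to the set $D_M$ appearing in Corollary 8, and one must read Table 1 in the correct convention ($r_4\circ R_k$ versus $R_k\circ r_4$) because $r_4$ is being applied to an already rotated language on the right. Apart from that, the argument is a mechanical composition calculation in the dihedral group of order $12$, so I do not expect any substantive obstacle beyond verifying those two table entries.
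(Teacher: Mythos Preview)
Your proposal is correct and matches the paper's intended argument: the paper's entire proof is the phrase ``Using Corollary 8,'' and your route through Theorem 6, Table 2, and the dihedral composition $r_4\circ R_5=r_5$, $r_4\circ R_2=r_2$ is exactly how one unpacks that phrase. Your caveat about verifying membership in $D_M$ is well placed, since the paper never defines $D_M$ explicitly; from context it must contain $D_3$, and with that granted your argument goes through without change.
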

\begin{theorem}
The picture language family $r_5(L_{d_{HBFA}}(GHBFA))$ equals to $L_d(GHBFA)$ for $d \in \Big\{\begin{pmatrix}
 s\rotatebox[origin=c]{300}{$\uparrow$}&\CircleArrowleft\\
\CircleArrowright&\rotatebox[origin=c]{300}{$\downarrow$}
\end{pmatrix}, \begin{pmatrix}
\CircleArrowright &\rotatebox[origin=c]{-60}{$\downarrow$}\\
s\rotatebox[origin=c]{300}{$\uparrow$}&\CircleArrowleft
\end{pmatrix}, \begin{pmatrix}
 \rotatebox[origin=c]{300}{$\uparrow$}&\CircleArrowright\\
\CircleArrowleft&s\rotatebox[origin=c]{-60}{$\downarrow$}
\end{pmatrix}, \begin{pmatrix}
s\rotatebox[origin=c]{60}{$\leftarrow$}&\CircleArrowright \\
\CircleArrowleft & \rotatebox[origin=c]{300}{$\uparrow$}
\end{pmatrix}\Big\}$.
\end{theorem}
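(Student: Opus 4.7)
The plan is to follow the same template used in the proofs of Theorems 9 and 10: for each of the four directions $d$ in the list, invoke Theorem 2 to rewrite $L_d(GHBFA)$ as $g(L_{d_{HBFA}}(GHBFA))$ for some group element $g \in \{R_0,\dots,R_5,r_0,\dots,r_5\}$, and then rewrite $r_5 = g \circ h$ using the composition table (Table 1), where $h \in \{r_0, r_3, R_3\}$ is one of the stabilizers of $L_{d_{HBFA}}(GHBFA)$ supplied by Corollaries 4, 5, and 6. Applying $h$ inside, followed by the corresponding corollary, gives $r_5(L_{d_{HBFA}}(GHBFA)) = g(h(L_{d_{HBFA}}(GHBFA))) = g(L_{d_{HBFA}}(GHBFA)) = L_d(GHBFA)$, as required.

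Concretely, I work through the four directions in the order they appear. For the first direction, Theorem 2 already gives $L_d(GHBFA) = r_1 \circ (R_1\circ R_1)(L_{d_{HBFA}}(GHBFA))$, and from Table 1 this composite is simply $r_5$, yielding the equality directly without any further appeal to a corollary. For the second direction, Theorem 2 gives $L_d(GHBFA) = R_2(L_{d_{HBFA}}(GHBFA))$; from Table 1 one reads $r_5 = R_2 \circ r_3$, so Corollary 5 absorbs the $r_3$ and concludes $r_5(L_{d_{HBFA}}(GHBFA)) = R_2(L_{d_{HBFA}}(GHBFA))$. For the third direction, Theorem 2 gives $L_d(GHBFA) = r_2(L_{d_{HBFA}}(GHBFA))$ (since the normal form $r_1 \circ R_5$ equals $r_2$); from Table 1 one reads $r_5 = r_2 \circ R_3$, so Corollary 6 gives the conclusion. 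For the fourth direction, Theorem 2 gives $L_d(GHBFA) = R_5(L_{d_{HBFA}}(GHBFA))$; from Table 1 one reads $r_5 = R_5 \circ r_0$, so Corollary 4 closes the case.

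There is no real obstacle in this proof; the whole argument is a bookkeeping exercise in the group of rotations and reflections, exactly mirroring the structure of the preceding two theorems. The only step requiring attention is choosing, for each target direction, a factorization of $r_5$ whose right factor is one of $r_0, r_3, R_3$ so that the stabilizer corollaries apply. That this is always possible follows from the fact that $\{I, r_0, r_3, R_3\}$ forms a subgroup of the full symmetry group (since $r_0 \circ r_3 = R_3$ by Corollary 6), and its four left cosets are exactly the partitioning of twelve directions into the groups of four appearing in Theorems 8, 9, 13, and the analogous unlisted one — so each target direction sits in the coset $r_5 \cdot \{I, r_0, r_3, R_3\}$, which is precisely what is needed. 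Writing out the four one-line identities, each citing Table 1, the relevant corollary, and Theorem 2, completes the proof.
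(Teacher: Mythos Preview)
Your proposal is correct and follows essentially the same approach as the paper: for each of the four directions you factor $r_5$ as $g\circ h$ with $h\in\{I,r_0,r_3,R_3\}$, absorb $h$ via the appropriate corollary, and then invoke Theorem 2, yielding exactly the factorizations $r_5$, $R_2\circ r_3$, $r_2\circ R_3$, $R_5\circ r_0$ that the paper uses. Your additional remark about $\{I,r_0,r_3,R_3\}$ being a subgroup whose left coset $r_5\cdot\{I,r_0,r_3,R_3\}$ contains precisely the four target group elements is a pleasant structural explanation that the paper does not make explicit.
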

\begin{proof}
$r_5(L_{d_{HBFA}}(GHBFA))$ =  $L_{\tiny{\begin{pmatrix}
 s\rotatebox[origin=c]{300}{$\uparrow$}&\CircleArrowleft\\
\CircleArrowright&\rotatebox[origin=c]{300}{$\downarrow$}
\end{pmatrix}}}(GHBFA) $\\
$r_5(L_{d_{HBFA}}(GHBFA))$ =$R_2(r_3(L_{d_{HBFA}}(GHBFA)))$ = $R_2(L_{d_{HBFA}}(GHBFA))$ = $L_{\tiny{\begin{pmatrix}
\CircleArrowright &\rotatebox[origin=c]{-60}{$\downarrow$}\\
s\rotatebox[origin=c]{300}{$\uparrow$}&\CircleArrowleft
\end{pmatrix}}}(GHBFA)$(Using Table 1,Corollary 5 and Theorem 2 )\\
$r_5(L_{d_{HBFA}}(GHBFA))$ =$r_2(R_3(L_{d_{HBFA}}(GHBFA)))$ = $r_2(L_{d_{HBFA}}(GHBFA))$ = $L_{\tiny{\begin{pmatrix}
 \rotatebox[origin=c]{300}{$\uparrow$}&\CircleArrowright\\
\CircleArrowleft&s\rotatebox[origin=c]{-60}{$\downarrow$}
\end{pmatrix}}}(GHBFA)$(Using Table 1,Corollary 6 and Theorem 2 )\\
$r_5(L_{d_{HBFA}}(GHBFA))$ =$R_5(r_0(L_{d_{HBFA}}(GHBFA)))$ = $R_5(L_{d_{HBFA}}(GHBFA))$ = $L_{\tiny{\begin{pmatrix}
s\rotatebox[origin=c]{60}{$\leftarrow$}&\CircleArrowright \\
\CircleArrowleft & \rotatebox[origin=c]{300}{$\uparrow$}
\end{pmatrix}}}(GHBFA)$(Using Table 1,Corollary 4 and Theorem 2 )  
\end{proof}
\begin{theorem}
The picture language family $r_5(L_{d_{HBFA}}(GHBFA))$ is equal to  $L_d(GHRFA)$ for $d \in D_3$.  
\end{theorem}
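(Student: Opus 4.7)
The plan is to follow the template used for the $D_1$ case (Theorem 8) and the $D_2$ case (Theorem 10), identifying $r_5(L_{d_{HBFA}}(GHBFA))$ with each of the four directions in $D_3$ by chaining together Theorem 4 and Theorem 6 with the GHRFA equivalences already established in Corollaries 11 and 12.

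First I would apply Theorem 4 to rewrite $r_5(L_{d_{HBFA}}(GHBFA))$ as $r_5(L_{d_{HRFA}}(GHRFA))$, and then invoke the $r_5$-entry of Theorem 6 --- after noting via Tables 1 and 2 that the normal form $r_1\circ(R_1\circ R_1)$ coincides with $r_5$ --- to identify it with $L_{\begin{pmatrix} s\rotatebox[origin=c]{300}{$\uparrow$}&\CircleArrowleft \end{pmatrix}}(GHRFA)$. This already covers one of the four members of $D_3$. I would then extend the identification to the remaining three directions by chaining through the corollaries: the second equation of Corollary 11 gives equality with $L_{\begin{pmatrix} s\rotatebox[origin=c]{300}{$\uparrow$}&\CircleArrowright \end{pmatrix}}(GHRFA)$, the second equation of Corollary 12 gives equality with $L_{\begin{pmatrix} \CircleArrowleft&s\rotatebox[origin=c]{-60}{$\downarrow$} \end{pmatrix}}(GHRFA)$, and the first equation of Corollary 11 gives equality with $L_{\begin{pmatrix} s\rotatebox[origin=c]{60}{$\leftarrow$}&\CircleArrowright \end{pmatrix}}(GHRFA)$.

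The main obstacle is organizational rather than technical: one must verify that Corollaries 11 and 12, taken together, merge all four elements of $D_3$ into a single equivalence class with no direction left out. Corollary 11 partitions $D_3$ into two pairs one way, while Corollary 12 supplies a transverse pair of equalities, so the two pairings together form a connected graph on the four directions. Once this is observed, the rest is a routine substitution chain requiring no new automaton construction beyond what Lemma 1 already supplies through Corollaries 7, 8, 11, and 12.
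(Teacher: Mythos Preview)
Your proposal is correct and follows essentially the same chain as the paper's own proof: Theorem~4 to pass from $GHBFA$ to $GHRFA$, then Theorem~6 to reach $L_{\begin{pmatrix} s\rotatebox[origin=c]{300}{$\uparrow$}&\CircleArrowleft \end{pmatrix}}(GHRFA)$, followed by Corollary~11 (second equation), Corollary~12 (second equation), and Corollary~11 (first equation) in exactly that order to cover the remaining three directions of $D_3$. Your extra remark about reading $r_5$ from its normal form in Table~2 is a harmless clarification that the paper leaves implicit.
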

\begin{proof}
$r_5(L_{d_{HBFA}}(GHBFA))$ =  $r_5(L_{d_{HRFA}}(GHRFA))$ (Theorem 4)\\
                                =  $L_{\tiny{\begin{pmatrix}
 s\rotatebox[origin=c]{300}{$\uparrow$}&\CircleArrowleft\\
\end{pmatrix}}}(GHRFA)$(Theorem 6) =$L_{\scriptstyle{\tiny{\begin{pmatrix}
s\rotatebox[origin=c]{300}{$\uparrow$}&\CircleArrowright
\end{pmatrix}}}}(GHRFA)$(Corollary 11)\\ = $L_{\tiny{\begin{pmatrix}
\CircleArrowleft&s\rotatebox[origin=c]{-60}{$\downarrow$}
\end{pmatrix}}}(GHRFA)$ (Corollary 12)= $L_{\tiny{\begin{pmatrix}
s  \rotatebox[origin=c]{60}{$\leftarrow$}&\CircleArrowright \\
\end{pmatrix}}}(GHRFA)$(Corollary 11)
\end{proof}
\begin{cor}$L_(GHBFA) =L_{\tiny{\begin{pmatrix}
 s\downarrow&\CircleArrowright
\end{pmatrix}}}(GHRFA) \cup L_{\tiny{\begin{pmatrix}
s\rotatebox[origin=c]{60}{$\downarrow$}&\CircleArrowright
    \end{pmatrix}}}(GHRFA) \cup L_{\tiny{\begin{pmatrix}
 s\rotatebox[origin=c]{300}{$\uparrow$}&\CircleArrowleft\\
\end{pmatrix}}}(GHRFA)$.
\end{cor}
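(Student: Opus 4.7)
The plan is to decompose $L(GHBFA)$ according to its defining union $\bigcup_{d\in D}L_d(GHBFA)$, collapse the twelve direction-indexed families into three equivalence classes, and then translate each class into one of the three GHRFA languages on the right-hand side of the corollary.

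First, I would combine Theorem 7, Theorem 9, and Theorem 11 (the three GHBFA characterization theorems preceding the corollary) to obtain
\[
L(GHBFA) \;=\; L_{d_{HBFA}}(GHBFA) \,\cup\, r_1(L_{d_{HBFA}}(GHBFA)) \,\cup\, r_5(L_{d_{HBFA}}(GHBFA)),
\]
since each of those three theorems exhibits a block of four directions in $D$ on which $L_d(GHBFA)$ collapses to, respectively, $L_{d_{HBFA}}(GHBFA)$, $r_1(L_{d_{HBFA}}(GHBFA))$, and $r_5(L_{d_{HBFA}}(GHBFA))$, and together the three blocks cover all twelve matrices of $D$.

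Next, I would translate each of the three GHBFA-terms into a GHRFA-term using Theorems 8, 10, and 12. Theorem 8 states $L_{d_{HBFA}}(GHBFA) = L_d(GHRFA)$ for every $d\in D_1$; specializing to $d=\begin{pmatrix} s\downarrow & \CircleArrowright\end{pmatrix}\in D_1$ produces the first term in the corollary. Theorem 10 gives $r_1(L_{d_{HBFA}}(GHBFA)) = L_d(GHRFA)$ for every $d\in D_2$; specializing to $d=\begin{pmatrix} s\rotatebox[origin=c]{60}{$\downarrow$} & \CircleArrowright\end{pmatrix}\in D_2$ produces the second. Theorem 12 gives $r_5(L_{d_{HBFA}}(GHBFA)) = L_d(GHRFA)$ for every $d\in D_3$; specializing to $d=\begin{pmatrix} s\rotatebox[origin=c]{300}{$\uparrow$} & \CircleArrowleft\end{pmatrix}\in D_3$ produces the third. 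Substituting these three equalities into the displayed decomposition above yields the claim.

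The only real obstacle is bookkeeping: one must confirm that the three quadruples of direction matrices quoted in Theorems 7, 9, and 11 are pairwise disjoint and exhaust $D$. This is a finite, mechanical check; each block has size four (matching the size-$4$ orbits of the subgroup generated by the reflections used in Corollaries 4--6), and the twelve matrices of $D$ are visibly distinct, so the three blocks partition $D$. Once that partition is verified, the rest of the argument is a direct chaining of the cited theorems.
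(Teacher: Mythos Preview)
Your proposal is correct and follows exactly the route the paper intends: the corollary is stated without proof immediately after Theorems 7--12, and your argument---partitioning the twelve directions of $D$ via Theorems 7, 9, 11 and then invoking Theorems 8, 10, 12 to pass to the three GHRFA representatives---is precisely the chain of deductions those theorems were set up to support. The only addition you make beyond the paper is the explicit remark that the three four-element blocks exhaust $D$, which the paper leaves tacit.
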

 
\begin{cor} $ Op(L(GHBFA)) = L(GHBFA)$, for \\$Op \in H = \{R_1, R_2, R_3, R_4, R_5, r_0, r_1, r_2, r_3,r_4, r_5\}$
\end{cor}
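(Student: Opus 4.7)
The plan is to leverage the group structure of $H$ recorded in Table \ref{comp} together with the orbit description of $L(GHBFA)$ furnished by Theorem 2 and the collapse results of Theorems 7, 9, and 10. First I would unroll the definition
\[
L(GHBFA) \;=\; \bigcup_{d\in D} L_d(GHBFA),
\]
and then apply Theorem 2 to rewrite every summand on the right as $Op_d(L_{d_{HBFA}}(GHBFA))$ for a uniquely determined $Op_d\in H$. Inspecting Theorem 2 shows that as $d$ ranges over $D$ the twelve operators $Op_d$ exhaust $H$ once each, so equivalently
\[
L(GHBFA) \;=\; \bigcup_{Op'\in H} Op'\bigl(L_{d_{HBFA}}(GHBFA)\bigr).
\]

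Now pick an arbitrary $Op\in H$. Applying $Op$ picture-wise, hence language-wise and family-wise, to both sides gives
\[
Op(L(GHBFA)) \;=\; \bigcup_{Op'\in H} Op\bigl(Op'(L_{d_{HBFA}}(GHBFA))\bigr) \;=\; \bigcup_{Op'\in H} (Op\circ Op')\bigl(L_{d_{HBFA}}(GHBFA)\bigr).
\]
Because $(H,\circ)$ is a group (Table \ref{comp}), left multiplication by $Op$ is a bijection of $H$; thus $\{Op\circ Op' : Op'\in H\} = H$. Reindexing the union therefore returns exactly the expression derived for $L(GHBFA)$ in the previous paragraph, giving $Op(L(GHBFA))=L(GHBFA)$, which is the claim.

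A shorter variant, if desired, would skip the full twelve-direction orbit and instead use the three-class collapse already proved: $L(GHBFA) = L_{d_{HBFA}}(GHBFA) \cup r_1(L_{d_{HBFA}}(GHBFA)) \cup r_5(L_{d_{HBFA}}(GHBFA))$ (Theorems 7, 9, 10). For any $Op\in H$ we then compute $Op\circ I$, $Op\circ r_1$, $Op\circ r_5$ in Table \ref{comp} and observe, using Corollaries 4--6 (which say that the stabilizer of $L_{d_{HBFA}}(GHBFA)$ contains $\{I,r_0,r_3,R_3\}$), that each product lies in one of the three cosets $I\cdot\{I,r_0,r_3,R_3\}$, $r_1\cdot\{I,r_0,r_3,R_3\}$, $r_5\cdot\{I,r_0,r_3,R_3\}$, so $Op$ permutes the three summands.

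The only point where one has to be a little careful is justifying that $Op$ distributes over the union and commutes with the symbolic application that defines $Op'(L_{d_{HBFA}}(GHBFA))$; this is just the remark made after Table \ref{norm} that the rigid motions act picture-wise, language-wise, and family-wise, so no genuine obstacle arises. The real content sits entirely in the group axioms for $H$ and in having already established, via the earlier theorems, that $L(GHBFA)$ is a single $H$-orbit (or equivalently a union of full cosets of the stabilizer).
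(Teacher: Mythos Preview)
Your proof is correct. The paper states this corollary without an explicit argument, so there is no detailed proof to compare against; your first route—writing $L(GHBFA)$ as the full dihedral orbit of $L_{d_{HBFA}}(GHBFA)$ via Theorem~2 and then using that left multiplication by $Op$ permutes the group—is the natural justification, and your second variant via the three-coset decomposition tracks the paper's implicit reasoning through Corollary~13 and Corollaries~4--6 even more closely. One minor notational point: the corollary's $H$ lists only the eleven non-identity symmetries, so $(H,\circ)$ is not literally a group; your reindexing argument really lives in the full twelve-element group of Table~\ref{comp}, which is harmless since the $Op=I$ case is trivial and $L_{d_{HBFA}}(GHBFA)$ already appears among the eleven terms by Corollary~4.
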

\section{Conclusion
}\label{conclu}

In this paper, we introduced and analyzed two  classes of finite automata operating on hexagonal grids: General Hexagonal Boustrophedon Finite Automata and General Hexagonal Returning Finite Automata. We established formal definitions for both automaton types and characterized the classes of languages they accept.

Our analysis revealed the computational relationships between these classes. The boustrophedon scanning pattern on hexagonal structures provides efficient systematic processing, while the returning capability adds computational flexibility for complex pattern recognition tasks.

The theoretical foundations established here open several avenues for future research. Extensions to probabilistic hexagonal automata, investigation of decidability properties for specific subclasses, and applications to computer vision and image processing represent promising directions. Additionally, exploring the computational complexity of recognition problems for these language classes could yield valuable insights into the practical efficiency of hexagonal grid-based computation.

These results contribute to the broader understanding of two-dimensional automata theory and provide a foundation for developing more sophisticated models that combine geometric advantages of hexagonal connectivity with advanced traversal strategies.










\bibliographystyle{unsrt}  
\bibliography{references}

\end{document}